\documentclass[12pt]{article}

\usepackage{hyperref}
\usepackage{amscd}
\usepackage{amsmath}
\usepackage{amsfonts}
\usepackage{amssymb}
\usepackage[english]{babel}

\parindent=1.8em

\textwidth16cm \hoffset-2cm \voffset-1.3cm

\newtheorem{theorem}{Theorem}[section]
\newtheorem{lemma}[theorem]{Lemma}
\newtheorem{proposition}[theorem]{Proposition}
\newtheorem{corollary}[theorem]{Corollary}
\newtheorem{definition}[theorem]{Definition}
\newtheorem{example}[theorem]{Example}
\newtheorem{remark}[theorem]{Remark}
\newtheorem{conjecture}{Conjecture}[section]

\newenvironment{proof}[1][Proof]{\textbf{#1.} }{\ \rule{0.5em}{0.5em}}

\newcommand{\bld}[1]{{\pmb #1}}
\newcommand{\ggt}[1]{{#1}{#1}^T}
\newcommand\bigzero{\makebox(0,0){\text{\huge0}}}

\newcommand{\Co}{\mathcal{C}}

\newcommand{\F}{\mathbb{F}}

\newcommand{\bc}{\bold{c}}
\newcommand{\bu}{\bold{u}}

\newcommand{\ba}{\bold{a}}
\newcommand{\bx}{\bold{x}}
\newcommand{\by}{\bold{y}}

\begin{document}

\title{Linear codes with arbitrary dimensional hull and pure LCD code.$\footnote{
 ymaouche@usthb.dz.}$ }

\author{Maouche Youcef}

\date{\small
${}^1$ Faculty of Mathematics, University of Science and Technology Houari Boumediene,\\ Bab Ezzouar, Algiers, Algeria.\\
}
\maketitle

\leftskip 0.8in
\rightskip 0.8in
\noindent
{\bf Abstract.} In this paper, we introduce a general construction of linear codes with small dimension hull from any non LCD codes. Furthermore, we show that for any linear code $\Co$ over
$\F_q$ ($q > 3$) with $dim(Hull(\Co))=h$ there exist an equivalent codes $\Co_j$ with $dim(Hull(\Co_j))=j$ for any integer $0\leq j \leq h$. We also introduce the notion of pure LCD code; an LCD code and all its equivalent are LCD; and construct an infinite family of pure LCD codes. In addition, we introduce a general construction of linear codes with one dimension hull.

\vskip 6pt
\noindent
{\bf Keywords.} LCD codes; Pure LCD codes; Hulls; Entanglement-assisted quantum error correcting codes.

\vskip 6pt
\noindent
2010 {\it Mathematics Subject Classification.} Primary 94B15, 94B05; Secondary 11T71.

\leftskip 0.0in
\rightskip 0.0in

\vskip 30pt

\section{Introduction}
The Euclidean hull of a linear code $\Co$ is the intersection of $\Co$ and its Euclidean dual $C^\perp$. A linear code $\Co$ is called $h$-dim hull if $dim(hull(C)) = h$. With this definition, a linear complementary dual (LCD) code is a $0$-dim hull code and an $[n,k]$ self-orthogonal code is $k$-dim hull. LCD codes have been widely applied in data storage, communications and cryptography \cite{boon,burn2006,carlet2018,tang17,carlet19,jin,yan,yang}. Massey \cite{massey92} gave the algebraic characterization of LCD codes, and showed that asymptotically good LCD codes exist. In \cite{tang17}, the authors show that any linear code over $\F_q\,(q > 3)$ is equivalent to an Euclidean LCD code and any linear code over $\F_{q^2}\,(q > 2)$ is equivalent to a Hermitian LCD code.

Linear code with small hull are also interesting due to its crucial role in checking permutation equivalence of two linear codes and determining the complexity of algorithms for computing the automorphism group of a linear code \cite{sendrier,li19,carlet19}. In \cite{li19}, the authors present some necessary and sufficient conditions that a linear codes and cyclic codes have one-dimensional hull and construct cyclic codes with one dimensional hull. In \cite{carlet19}, Claude et al. employ character sums in semi-primitive case to construct LCD codes and linear codes with one-dimensional hull from cyclotomic fields and multiplicative subgroups of finite fields. In \cite{carlet2018}, Hao prove that for a nonnegative integer $h$ satisfying $0 \leq h \leq n-1$, a linear $[2n, n]$ self-dual code is equivalent to a linear $h$-dimension hull code. Recently, there have been a lot of research works on linear codes with small hulls and its application in entanglementassisted quantum error-correcting codes (EAQECCs), the reader is referred to \cite{sok22,guenda2018,sendrier}.

The main goal of this manuscript is to extend the results in \cite{carlet19,hao} and construct an arbitrary dimensional hull from an existing one. We show that for any $[n,k,d]$ linear code over $\F_q$ $(q>3)$, with $h$ dimensional hull there exist a monomial equivalent codes $\Co_j$ to $\Co$ with $j$ dimensional hull for any $0\leq j\leq h$. More precisely, we introduce a general construction of small dimensional hull codes from any linear codes and arbitrary dimensional hulls from any self-orthogonal code. Furthermore, we introduce pure LCD code and show that such codes are very rare or do not exist over finite fields with characteristic $2$. In addition, we construct an infinite family of pure LCD code over finite fields with odd characteristics.

This paper is organized as follows. In Section 2, we give some preliminaries on linear codes and necessary and sufficient conditions for such code to be $h$ dimensional hull. In Section 3, we construct linear codes with small dimensional hull from any linear code and arbitrary dimensional hull from self-orthogonal codes. In Section 4, we introduce the notion of pure LCD codes and give a general construction of linear codes with one-dimensional hull.

\vskip 30pt
\section{Preliminaries}

Throughout this paper, let $\F_q$ be a finite field of order $q=p^m$ where $p$ is prime and $m$ is a positive integer. The multiplicative group of $\F_q$ is denoted by $\F_q^*$. An $[n, k, d]$ linear code $\Co$ over $\F_q$ is a linear subspace of $\F_q^n$ with dimension $k$ and minimum Hamming distance $d$. Let $\bx= (x_0, x_1,\cdots , x_{n-1})$, $\by = (y_0, y_1,\cdots, y_{n-1})\in \F_q^n$, their inner product is defined as usual
$$\langle \bx, \by\rangle= x_0y_0 + x_1y_1 +\cdots+ x_{n-1}y_{n-1} \in \F_q.$$
Two vectors $\bx$, $\by$ are called orthogonal if $\langle \bx, \by\rangle = 0$. For a linear code $\Co$ over $\F_q$, its dual code $\Co^\perp$ is the set of vectors orthogonal to every codeword of $\Co$ under the inner product i.e.,

$$\Co^\perp = \{\bx \in \F_q^n \;\,|\;\, \langle \bx, \by \rangle = 0, \;\,\; \forall \by \in \Co\}.$$
A code $\Co$ is called self-orthogonal if $\Co \subseteq \Co^\perp$, and it is called self-dual if $\Co^\perp = \Co$. The hull of a linear code $\Co$ is defined by
$$\text{hull}(\Co) = \Co \cap \Co^\perp.$$
A linear code $\Co$ is called $h$-dim hull if $dim(hull(C)) = h$. With this definition, a linear complementary dual (LCD) code is a $0$-dim hull code and $[n,k]$ self-orthogonal code is $k$-dim hull. For any $\ba = (a_1,\cdots, a_n) \in \F^n_q$ and permutation $\sigma$ of $\{1, 2,\cdots, n\}$, we define $\Co_\ba$ and $\sigma (\Co)$ as the following linear codes
$$\Co_\ba = \{ (a_1c_1,\cdots, a_nc_n) \;:\; (c_1,\cdots, c_n) \in \Co \},$$
and
$$\sigma (\Co) = \{(c_{\sigma (1)},\cdots , c_{\sigma (n)}) \;:\; (c_1,\cdots , c_n) \in \Co \}.$$
Two codes $\Co$ and $\Co^\prime$ over $\F_q$  are called monomial equivalent if $\Co^\prime =\sigma (\Co_\ba)$ for some permutation $\sigma$ of $\{1, 2, \cdots, n\}$ and $\ba \in (\F^*_q)^n$. Let $\Co_1$ and $\Co_2$ be two codes over the same field $\F_q$, and let $G_1$ be a generator matrix for $\Co_1$. Then $\Co_1$ and $\Co_2$ are monomial equivalent if and only if there is a monomial matrix $M$ (a square matrix with exactly one nonzero entry in each row and column) so that $G_1M$ is a generator matrix of $\Co_2$.

Let $\Co$ be an $[n,k]$ linear code. If $\Co$ is the direct sum of subspaces $U_i$ for $1\leq i \leq k $ which are mutually orthogonal, then we shall say that 
$\Co$ is the orthogonal sum of the $U_i$ and use the symbol 
$$\Co =U_1 \perp \cdots \perp U_k. $$
\begin{theorem}\label{theorem1}
Let  $\Co$ be an $[n,k]$ linear code over $\F_q$ where $q$ is odd and $Hull(\Co)=h$. Then there exist code words $\bc_1,\cdots,\bc_k \in \Co$ such that
$$ \Co=<\bc_1>\perp <\bc_2>\perp \cdots \perp <\bc_k>$$
and $\Co$ is LCD if and only if $\bc_i^2 \not=0$ for $1\leq i \leq k$. Furthermore,
$$Hull(\Co)=<\bc_{j_1}>\perp <\bc_{j_2}>\perp \cdots \perp <\bc_{j_l}> \quad  \bc_{j_r}^2=0 \;\: \forall 1\leq r\leq l.$$
\end{theorem}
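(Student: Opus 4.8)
The plan is to view $\Co$, equipped with the restriction of the inner product $\langle\cdot,\cdot\rangle$, as a finite-dimensional symmetric bilinear space over $\F_q$, and to recognize that the hull $\text{Hull}(\Co)=\Co\cap\Co^\perp$ is exactly the radical $\{\bv\in\Co:\langle\bv,\bx\rangle=0\text{ for all }\bx\in\Co\}$ of this form. The orthogonal decomposition into lines is then the classical diagonalization of a symmetric bilinear form, and the two auxiliary claims will fall out of tracking which diagonal entries vanish.

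First I would split off the radical. Let $R=\text{Hull}(\Co)$, of dimension $h$, and fix any vector-space complement $W$ so that $\Co=R\oplus W$. Every vector of $R$ is orthogonal to all of $\Co$, so $R\perp W$ and, for any basis $\bc_{j_1},\dots,\bc_{j_h}$ of $R$, the lines $\langle\bc_{j_r}\rangle$ are mutually orthogonal with $\langle\bc_{j_r},\bc_{j_r}\rangle=0$. A short computation shows the form restricted to $W$ is nondegenerate: if $\bu\in W$ were orthogonal to all of $W$, then (using $R\perp W$) it would be orthogonal to all of $\Co$, hence lie in $R\cap W=0$.

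Next I would diagonalize the nondegenerate space $W$. Here is the one place the hypothesis $q$ odd is essential: over a field of characteristic $\neq 2$, polarization gives $\langle\bv,\bu\rangle=\tfrac12(\langle\bv+\bu,\bv+\bu\rangle-\langle\bv,\bv\rangle-\langle\bu,\bu\rangle)$, so a nonzero form cannot be totally isotropic and one can always find $\bv$ with $\langle\bv,\bv\rangle\neq0$. Inducting on $\dim W$ by peeling off $\langle\bv\rangle$ and replacing $W$ by $W\cap\langle\bv\rangle^\perp$ produces an orthogonal basis $\bc_{i_1},\dots,\bc_{i_{k-h}}$ of $W$; nondegeneracy forces every $\langle\bc_{i_s},\bc_{i_s}\rangle\neq0$, since an isotropic vector orthogonal to all the others would lie in the radical of $W$. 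Concatenating the two bases yields $\Co=\langle\bc_1\rangle\perp\cdots\perp\langle\bc_k\rangle$.

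Finally the two characterizations follow by bookkeeping against a fixed orthogonal basis. Expanding $\bv=\sum_i a_i\bc_i$, orthogonality gives $\langle\bv,\bc_i\rangle=a_i\langle\bc_i,\bc_i\rangle$. If some $\bc_i^2=0$ then $\bc_i$ is orthogonal to every basis vector, hence $\bc_i\in\text{Hull}(\Co)$, so the hull is nontrivial and $\Co$ is not LCD; conversely, if all $\bc_i^2\neq0$ then $\bv\in\text{Hull}(\Co)$ forces every $a_i=0$, so the hull is trivial and $\Co$ is LCD. The same computation pins down the hull: letting $S=\{i:\bc_i^2=0\}$, membership $\bv\in\text{Hull}(\Co)$ kills $a_i$ for $i\notin S$ and leaves exactly $\bigoplus_{i\in S}\langle\bc_i\rangle$, which is the claimed orthogonal sum. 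The main obstacle is the middle step: ordinary Gram--Schmidt breaks down at isotropic vectors, so the argument must first split off the radical before diagonalizing, and it is precisely $q$ odd that guarantees the needed non-isotropic vectors exist.
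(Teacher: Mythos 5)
Your proof is correct and follows essentially the same route as the paper: the paper's entire proof is a one-line appeal to Artin's Theorem 3.7 on orthogonal decompositions of symmetric bilinear spaces in odd characteristic, and that theorem is exactly the radical-splitting-plus-diagonalization argument you carry out (with the hull identified as the radical of the restricted form). The only difference is that you supply the details the paper outsources to the reference, including the polarization step where $q$ odd is used and the final bookkeeping for the LCD criterion and the description of the hull.
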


\begin{proof}
Following \cite{artin}, we regard $\Co$ with the inner product as a subspace of finite geometry. Therefore, the proof follows from [\cite{artin}, Theorem 3.7].
\end{proof}

\begin{corollary}\label{co1}
Let $\Co$ be an $[n, k]$ linear code over $\F_q$ with $dim(Hull(\Co)) = h$ and $q$ is odd. There exists a generator matrix $G$ of $\Co$ such that $GG^T$ is a diagonal matrix.
\end{corollary}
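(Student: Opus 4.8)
The plan is to read off the result directly from Theorem \ref{theorem1}. Since $q$ is odd and $\Co$ is an $[n,k]$ code over $\F_q$, Theorem \ref{theorem1} supplies codewords $\bc_1,\dots,\bc_k\in\Co$ with
$$\Co=\langle\bc_1\rangle\perp\langle\bc_2\rangle\perp\cdots\perp\langle\bc_k\rangle.$$
I would take $G$ to be the $k\times n$ matrix whose $i$-th row is $\bc_i$. Because this orthogonal sum is in particular a direct sum of $k$ one-dimensional subspaces, each $\bc_i$ is nonzero and the family $\bc_1,\dots,\bc_k$ is linearly independent, hence a basis of $\Co$; consequently $G$ is a genuine generator matrix of $\Co$.

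Next I would compute the entries of $GG^T$. By definition the $(i,j)$ entry equals the inner product $\langle\bc_i,\bc_j\rangle$. For $i\neq j$, the summands $\langle\bc_i\rangle$ and $\langle\bc_j\rangle$ are mutually orthogonal (this is exactly the meaning of the symbol $\perp$ in the decomposition), so $\langle\bc_i,\bc_j\rangle=0$. The diagonal entries are $\langle\bc_i,\bc_i\rangle=\bc_i^2$. Therefore $GG^T=\mathrm{diag}(\bc_1^2,\dots,\bc_k^2)$ is diagonal, which is the desired conclusion.

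There is essentially no analytic obstacle here: the corollary is an immediate repackaging of Theorem \ref{theorem1}, with the entire content carried by the orthogonal decomposition. The only two points I would be careful to state explicitly are that the directness of the orthogonal sum forces linear independence of the $\bc_i$ (so that $G$ really generates $\Co$ rather than a proper subcode), and that each $\bc_i\neq 0$ so that the rows are well-defined basis vectors. Both follow at once from the fact that each $\langle\bc_i\rangle$ is one-dimensional and the sum is direct, so the proof is complete once these observations are recorded.
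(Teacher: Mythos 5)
Your proof is correct and follows exactly the paper's route: the paper likewise derives the corollary immediately from Theorem \ref{theorem1} by choosing an orthogonal basis, and your writeup simply spells out the details (linear independence of the $\bc_i$ and the vanishing of the off-diagonal inner products) that the paper leaves implicit.
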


\begin{proof}
The proof is immediate from Theorem \ref{theorem1} by choosing an orthogonal basis for $\Co$.
\end{proof}

\begin{theorem}[\cite{Massey}]\label{th23}
If $G$ is a generator matrix for the $[n, k]$ linear code $\Co$, then $\Co$ is an LCD code if and only if, the $k \times k$ matrix $GG^T$ is nonsingular.
\end{theorem}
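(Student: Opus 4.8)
The plan is to reduce the LCD condition to a statement about the dimension of the hull and then compute that dimension directly from $G$. Recall that $\Co$ is LCD precisely when $\text{hull}(\Co)=\Co\cap\Co^\perp=\{\mathbf{0}\}$, i.e. when $\Co$ has $0$-dimensional hull. Since $G$ is a $k\times n$ generator matrix, every codeword of $\Co$ is uniquely of the form $\bu G$ with $\bu\in\F_q^k$, the uniqueness coming from the fact that $G$ has full row rank $k$. The strategy is to identify exactly which of these codewords also lie in $\Co^\perp$, thereby describing the hull, and to read off its dimension.

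First I would make explicit the membership condition for $\Co^\perp$. A vector $\bx\in\F_q^n$ lies in $\Co^\perp$ iff $\langle\bx,\by\rangle=0$ for every $\by=\bv G\in\Co$; writing this out gives $\bx G^T\bv^T=0$ for all $\bv\in\F_q^k$, which holds iff $\bx G^T=\mathbf{0}$. Applying this to a codeword $\bx=\bu G$, I get that $\bu G\in\Co\cap\Co^\perp$ if and only if $\bu GG^T=\mathbf{0}$. Because $\bu\mapsto\bu G$ is injective, this sets up a linear isomorphism between $\text{hull}(\Co)$ and the left null space $\{\bu\in\F_q^k:\bu GG^T=\mathbf{0}\}$ of the $k\times k$ matrix $GG^T$.

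From here the conclusion is immediate by rank--nullity: $\dim(\text{hull}(\Co))=k-\rank(GG^T)$. Hence the hull is trivial, i.e. $\Co$ is LCD, if and only if $\rank(GG^T)=k$, which for the square matrix $GG^T$ is exactly the condition that it be nonsingular.

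I expect the only genuinely delicate point to be the bookkeeping in the second step: one must use that $G$ has full rank $k$ both to guarantee the unique representation $\bx=\bu G$ of codewords and to transport the null space of $GG^T$ faithfully onto the hull. It is also worth noting that the criterion is independent of the chosen generator matrix, since if $G'=AG$ for an invertible $A$ then $G'G'^T=A(GG^T)A^T$, so $\det(G'G'^T)=\det(A)^2\det(GG^T)$ and nonsingularity is preserved; this makes the statement a genuine property of $\Co$ rather than of a particular $G$. Note finally that, unlike Theorem~\ref{theorem1} and Corollary~\ref{co1}, this argument is purely linear-algebraic and requires no hypothesis on the characteristic of $\F_q$.
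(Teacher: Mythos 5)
Your proof is correct. The paper itself cites this result from Massey without proof, but your argument is exactly the computation the paper carries out when proving Proposition~\ref{guenda}: identifying $\mathrm{hull}(\Co)$ with the left null space of $GG^T$ via the injection $\bu\mapsto\bu G$ and concluding $\dim(\mathrm{hull}(\Co))=k-\rank(GG^T)$, of which the LCD criterion is the case $\rank(GG^T)=k$.
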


The following proposition was proven in \cite{guenda2018}. Theorem \ref{theorem1} and Corollary \ref{co1} give another proof for odd $q$, however, we give a third proof with a direct approach.
\begin{proposition}[\cite{guenda2018}]\label{guenda}
Let $\Co$ be an $[n, k, d]_q$ linear code with parity check matrix $H$ and generator matrix G. Then, $rank(HH^T)$ and $rank(GG)$ are independent of $H$ and $G$ so that
$$rank(HH^{T})= n-k-dim(hull(C^\perp)) = n-k-dim(hull(C)),$$
and
$$rank(GG^T)= k-dim(hull(C^\perp)) = k-dim(hull(C)).$$
\end{proposition}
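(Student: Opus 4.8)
The plan is to reduce everything to one elementary computation relating the hull of a code to the null space of the Gram matrix of a generating set, and then to read off all four identities from it. Before that, I would dispose of the equalities of the form $dim(hull(\Co^\perp)) = dim(hull(\Co))$, which are in fact immediate rather than merely numerical: since $(\Co^\perp)^\perp = \Co$, we have
$$hull(\Co^\perp) = \Co^\perp \cap (\Co^\perp)^\perp = \Co^\perp \cap \Co = hull(\Co),$$
so the two hulls literally coincide. This leaves only the two rank formulas to establish.

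For the core step, let $G$ be a $k\times n$ generator matrix of $\Co$, so its rows form a basis and every codeword is uniquely of the form $\bx G$ with $\bx\in\F_q^k$. I would observe that $\bx G\in\Co^\perp$ precisely when $\bx G$ is orthogonal to each row of $G$, that is, when $G(\bx G)^T = GG^T\bx^T = 0$. Hence the assignment $\bx\mapsto\bx G$ carries the null space of the $k\times k$ matrix $GG^T$ onto $hull(\Co)=\Co\cap\Co^\perp$; because $G$ has full row rank this assignment is injective, so it preserves dimension. Applying rank-nullity to $GG^T$ then gives
$$dim(hull(\Co)) = k - rank(GG^T),$$
which is exactly the second displayed formula.

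The same argument transfers verbatim to $\Co^\perp$, for which the parity check matrix $H$ of $\Co$ serves as a full-rank $(n-k)\times n$ generator matrix; it yields $dim(hull(\Co^\perp)) = (n-k) - rank(HH^T)$, and combined with the first paragraph this is the first displayed formula. For the independence of $rank(GG^T)$ and $rank(HH^T)$ from the chosen matrices, I would note that any two generator matrices of $\Co$ differ by an invertible change of basis $G' = SG$ with $S\in GL_k(\F_q)$, whence $G'(G')^T = S(GG^T)S^T$ has the same rank as $GG^T$, and likewise for $H$. There is no serious obstacle here, as every step is routine linear algebra; the only points requiring care are using the full row rank of $G$ to move between a codeword $\bx G$ and its coordinate vector $\bx$ without distorting dimensions, and correctly identifying $H$ as a generator matrix of $\Co^\perp$ (not of $\Co$) when the argument is reapplied.
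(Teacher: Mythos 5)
Your proposal is correct and follows essentially the same route as the paper: identify $hull(\Co)$ with the image under $\bx\mapsto\bx G$ of the kernel of $GG^T$, apply rank--nullity, and repeat with $H$ viewed as a generator matrix of $\Co^\perp$, the equality $hull(\Co)=hull(\Co^\perp)$ being immediate from $(\Co^\perp)^\perp=\Co$. Your version is in fact slightly more careful than the paper's, since you explicitly justify the injectivity via the full row rank of $G$ and verify that the ranks are independent of the choice of generator and parity check matrices.
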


\begin{proof} Since $Hull(C) = Hull(C^\perp)$, the second equalities are obvious. We have 
\begin{align*}
Hull(C)&=\left\{ x \in \F_q^n : x \in C, x \in C^\perp \right\} \\
&=\left\{ x \in \F_q^n :  x \in C, xG^T=0 \right\} \\
&=\left\{ yG  : y \in \F_q^k  \;\:\vert\;\:  yGG^T=0 \right\}. 
\end{align*}
Note that $\#Hull(C)=\#Ker(GG^T)$, hence $dim(Hull(C))=null(GG^T)$. Therefore
\begin{align*}
k&=rank(GG^T)+nul(GG^T)\\
&=rank(GG^T)+dim(Hull(C)).
\end{align*}
This completes the proof for the second statement. Since $G$ is a parity check matrix of $C^\perp$, the first statement can be proven by a similar argument.
\end{proof}

\vspace*{2mm}
The next theorem introduces a general construction of LCD code from any linear code over $\F_q$ ($q>3$).

\begin{theorem}[\cite{carlet2018}]\label{carlthe}
Let $q$ be a power of a prime with $q > 3$ and $\Co$ be an $[n, k, d]$ linear code over $\F_q$. Then, there exists $\bold{a}=(a_1,\cdots, a_n) \in \F^n_q$ with $a_j \not= 0$ for any $1 \leq j \leq n$ such that $\Co_\ba$ is an LCD code.
\end{theorem}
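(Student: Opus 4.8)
The plan is to convert the LCD condition into the non-vanishing of a single polynomial and then to show that polynomial can be made nonzero by an admissible choice of the scaling vector. Fix a generator matrix $G$ of $\Co$, and for $\ba=(a_1,\dots,a_n)$ with each $a_j\neq 0$ set $D=\mathrm{diag}(a_1,\dots,a_n)$, so that $GD$ is a generator matrix of $\Co_\ba$. By Theorem \ref{th23}, $\Co_\ba$ is LCD if and only if $(GD)(GD)^T=GD^2G^T$ is nonsingular. Writing $b_j=a_j^2$ and letting $\mathbf{g}_j$ be the $j$-th column of $G$, we have $GD^2G^T=\sum_{j=1}^n b_j\,\mathbf{g}_j\mathbf{g}_j^T$, so the problem reduces to finding nonzero squares $b_1,\dots,b_n\in\F_q$ making this matrix invertible.

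Next I would expand the determinant by the Cauchy--Binet formula:
$$\det\!\left(\sum_{j=1}^n b_j\,\mathbf{g}_j\mathbf{g}_j^T\right)=\sum_{\substack{S\subseteq\{1,\dots,n\}\\ |S|=k}}\left(\prod_{j\in S}b_j\right)\bigl(\det G_S\bigr)^2,$$
where $G_S$ is the $k\times k$ submatrix of $G$ on the columns indexed by $S$. Denote this by $f(b_1,\dots,b_n)$. Since $G$ has rank $k$, some set $S$ of $k$ columns is independent, so $\det G_S\neq 0$ and hence $(\det G_S)^2\neq 0$; thus $f$ is a \emph{nonzero} polynomial that is multilinear (degree at most $1$ in each $b_j$).

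The crux is then to realize $f\neq 0$ with each $b_j$ restricted to the set $\mathcal{S}$ of nonzero squares of $\F_q$. This follows from the multilinear Combinatorial Nullstellensatz, or directly by induction on $n$: writing $f=b_n g+h$ with $g,h$ multilinear in $b_1,\dots,b_{n-1}$, at least one of $g,h$ is nonzero; fixing the first $n-1$ variables in $\mathcal{S}$ (by the inductive hypothesis) so the nonzero one does not vanish, $f$ becomes an affine, not-identically-zero function of $b_n$, hence vanishes at at most one point, and a suitable $b_n\in\mathcal{S}$ exists as soon as $|\mathcal{S}|\geq 2$. Finally I recover $\ba$ by letting each $a_j$ be a square root of the chosen $b_j$.

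The main obstacle, and precisely where the hypothesis $q>3$ enters, is the cardinality bound $|\mathcal{S}|\geq 2$. For odd $q$ the nonzero squares number $(q-1)/2$, which is at least $2$ exactly when $q\geq 5$; for even $q$ the squaring map is a bijection, so all $q-1\geq 3$ nonzero elements are squares. Both cases are subsumed by $q>3$, and the breakdown at $q\in\{2,3\}$, where $|\mathcal{S}|=1$ forces every $b_j=1$ and leaves no freedom, explains why the hypothesis cannot be weakened.
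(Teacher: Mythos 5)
Your argument is correct, but it follows a genuinely different route from the one the paper relies on (the paper does not reprove this theorem; it quotes it from \cite{carlet2018} and reproduces the key ingredient of that proof as Lemma \ref{matrixlemma}). The cited proof puts $G=[I_k:P]$ in standard form, rescales \emph{only the first $k$ columns}, writes $G'G'^T=M+diag_k(\bu)$ with $u_j=a_j^2-1$, and uses the expansion of $\det(M+diag_k(\bu))$ over subsets with nonvanishing minors to pick the $a_j$ one at a time. You instead allow all $n$ columns to be rescaled, expand $\det(GD^2G^T)$ by Cauchy--Binet into the nonzero multilinear polynomial $f(b_1,\dots,b_n)=\sum_S\bigl(\prod_{j\in S}b_j\bigr)(\det G_S)^2$, and realize a nonvanishing point with each $b_j$ in the set $\mathcal{S}$ of nonzero squares via the multilinear Combinatorial Nullstellensatz; the hypothesis $q>3$ enters cleanly as $|\mathcal{S}|\ge 2$, with the same breakdown at $q\in\{2,3\}$ as in the original. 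Your version is more conceptual and self-contained, and it makes the role of $q>3$ transparent. What it does not give for free is the extra structural information that the present paper actually uses later: in the proof of Lemma \ref{lemma2} the author needs an LCD-producing vector of the special form $\ba'=(a_1,\dots,a_k,1,\dots,1)$, which the standard-form argument delivers automatically. Your method recovers this too with one additional remark --- restricting to $b_{k+1}=\dots=b_n=1$, the polynomial $f(b_1,\dots,b_k,1,\dots,1)$ still contains the monomial $b_1\cdots b_k$ with coefficient $(\det I_k)^2=1$, hence is nonzero and the same Nullstellensatz step applies --- but as written your proof only yields the theorem as stated, not that refinement.
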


Let $M$ be an $k\times k$ matrix over $\F_q$ and $\bu$ be a vector in $\F_q^k$. Following \cite{carlet19}, we denote by $diag_k[\bu]$ the diagonal $k\times k$ matrix whose elements on the diagonal are $u_1,\cdots, u_k$. Let $I=\{i_1,\cdots,i_l\}$ be a subset of $\{1,\cdots,k\}$, we define the submatrix $M_I$ of $M$ obtained by deleting the $i_1,\cdots,i_j$-th rows and columns of $M$. Denote $M_I = 1$ if $I = \{1, 2,\cdots,k\}$ and $M_\emptyset = M$. The following lemmas will be used in the sequel.

\begin{lemma}[\cite{carlet2018}]\label{matrixlemma} Let $M$ be a $k\times k$ matrix over $\F_q$ and $t$ an integer with $0 \leq t \leq k-1$. Suppose that $det(M_I) = 0$ holds for any subset $I$ of $\{1, 2,\cdots,k\}$ with $0\leq \# I \leq t$. Then, for any $1\leq j \leq t + 1$ and every word $\bu \in \F_q^k$ of Hamming weight $j$, denoting its support by $J$, we have:
$$ det(M+diag_k(\bu))=\left(\prod_{i \in J} u_i\right) det(M_I) .$$
\end{lemma}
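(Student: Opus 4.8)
The plan is to expand $\det\big(M+diag_k(\bu)\big)$ using the multilinearity of the determinant in its columns, and then to annihilate almost every resulting term by the vanishing hypothesis on the minors $M_I$. Write $D=diag_k(\bu)$. Since $\bu$ has support $J$, the matrix $M+D$ agrees with $M$ in every column whose index lies outside $J$, while for each $i\in J$ its $i$-th column equals the $i$-th column of $M$ plus $u_i\bld{e}_i$, where $\bld{e}_i$ is the $i$-th standard basis vector.

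First I would apply column multilinearity in exactly the $j=\#J$ columns indexed by $J$, splitting each of them as (original $M$-column) $+\,u_i\bld{e}_i$. This expands the determinant into a sum over the subsets $S\subseteq J$ recording in which columns the $u_i\bld{e}_i$ term was selected, namely
$$\det(M+D)=\sum_{S\subseteq J}\Big(\prod_{i\in S}u_i\Big)\det\big(M^{(S)}\big),$$
where $M^{(S)}$ is obtained from $M$ by replacing, for each $i\in S$, its $i$-th column by $\bld{e}_i$ and leaving all other columns untouched.

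Second I would identify each $\det\big(M^{(S)}\big)$ with a principal minor of $M$. For every $i\in S$ the $i$-th column of $M^{(S)}$ has a single nonzero entry, equal to $1$, in the diagonal position $(i,i)$; expanding the determinant by cofactors along each such column collapses it to the submatrix obtained by deleting precisely the rows and columns indexed by $S$, and a short sign check shows no net sign is introduced, so that $\det\big(M^{(S)}\big)=\det(M_S)$. Hence $\det(M+D)=\sum_{S\subseteq J}\big(\prod_{i\in S}u_i\big)\det(M_S)$. Finally I would invoke the hypothesis: every proper subset $S\subsetneq J$ satisfies $0\le\#S\le j-1\le t$, so $\det(M_S)=0$, and only the term $S=J$ survives, giving the asserted identity $\det(M+diag_k(\bu))=\big(\prod_{i\in J}u_i\big)\det(M_J)$.

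The main obstacle — the only step requiring genuine care — is the sign bookkeeping in the second step: one must be sure that iterating cofactor expansion along the columns in $S$ contributes no cumulative sign. The cleanest route is to run the Leibniz permutation expansion of $\det\big(M^{(S)}\big)$ directly: the factor coming from each column $c\in S$ forces any contributing permutation to fix $c$, after which the surviving permutations are exactly those of the complement $\{1,\dots,k\}\setminus S$ with unchanged signature, reproducing precisely $\det(M_S)$ without extra sign. Everything else reduces to multilinearity and the cardinality bound $j\le t+1$.
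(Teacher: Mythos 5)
Your argument is correct and complete. Note, however, that the paper itself gives no proof of this lemma: it is imported verbatim from \cite{carlet2018} as a known result, so there is no internal proof to compare against. Your route --- column multilinearity over the support $J$, giving $\det(M+\mathrm{diag}_k(\bu))=\sum_{S\subseteq J}\bigl(\prod_{i\in S}u_i\bigr)\det(M_S)$, followed by the observation that every proper $S\subsetneq J$ has $\#S\le j-1\le t$ and hence $\det(M_S)=0$ --- is the standard and cleanest self-contained proof, and your sign verification via the Leibniz expansion (permutations contributing to $\det(M^{(S)})$ must fix every index in $S$, so they restrict to permutations of the complement with unchanged signature) is exactly right. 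Two small points worth recording: the displayed conclusion in the statement has a typo ($\det(M_I)$ should read $\det(M_J)$), which you silently and correctly repaired; and in the boundary case $S=J=\{1,\dots,k\}$ your identity $\det(M^{(J)})=\det(I_k)=1$ matches the paper's convention $M_{\{1,\dots,k\}}=1$, so the formula holds there as well.
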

\begin{lemma}
Let $M$ be a non-singular $k\times k$ matrix over $\F_q$. For any vector $\bu \in \F_q^k/\{0\}$ with Hamming weight $j$ and support $J$, we suppose that $det(M_I) = 0$ holds for any subset $I$ of $J$ with $1\leq \# I < j$. Then
$$ det(M+diag_k(\bu))=det(M)+ \left(\prod\limits_{i \in J} u_i\right) det(M_J).$$
\end{lemma}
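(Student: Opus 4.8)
The plan is to compute $\det(M+diag_k(\bu))$ by exploiting the multilinearity of the determinant in its columns, in the same spirit as the proof of Lemma \ref{matrixlemma}. Write $M^{(c)}$ for the $c$-th column of $M$ and $e_c$ for the $c$-th standard basis vector. Since $diag_k(\bu)$ only perturbs the diagonal, the $c$-th column of $M+diag_k(\bu)$ is $M^{(c)}+u_c e_c$, and $u_c=0$ whenever $c\notin J$. Expanding the determinant as a multilinear function over the two-term columns (the only ones that split are those indexed by $J$, since elsewhere $u_c=0$), I obtain
$$\det(M+diag_k(\bu))=\sum_{S\subseteq J}\left(\prod_{i\in S}u_i\right)\det(N_S),$$
where $N_S$ is the matrix whose $i$-th column is $e_i$ for $i\in S$ and $M^{(i)}$ otherwise.

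Next I would identify each $\det(N_S)$ with a principal minor of $M$. Performing a Laplace expansion of $N_S$ along the columns indexed by $S$, each such column is a standard basis vector $e_i$, so the only nonvanishing term forces the chosen row set to equal $S$; the corresponding $|S|\times|S|$ block is the identity, the total sign is $(-1)^{2\sum_{i\in S}i}=1$, and the complementary block is the submatrix of $M$ on the rows and columns outside $S$. In the paper's notation this is exactly $M_S$, so $\det(N_S)=\det(M_S)$, giving
$$\det(M+diag_k(\bu))=\sum_{S\subseteq J}\left(\prod_{i\in S}u_i\right)\det(M_S),$$
with the conventions $M_\emptyset=M$ and the empty product equal to $1$.

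Finally I invoke the hypothesis. For every $S\subseteq J$ with $1\leq\#S<j$ we have $\det(M_S)=0$, so all these intermediate terms drop out. Only the term $S=\emptyset$, contributing $\det(M)$, and the term $S=J$, contributing $\left(\prod_{i\in J}u_i\right)\det(M_J)$, survive, which is precisely the asserted identity.

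The hard part will be the bookkeeping in the second step: verifying that the multilinear expansion term attached to a subset $S$ is exactly $\left(\prod_{i\in S}u_i\right)\det(M_S)$, i.e.\ that the repeated cofactor expansion along the replaced columns yields the complementary principal minor with sign $+1$. This is the same mechanism underlying Lemma \ref{matrixlemma}, so it can be stated briefly. I would note that nonsingularity of $M$ plays essentially no role in the algebraic identity itself; it only guarantees that the retained term $\det(M)=\det(M_\emptyset)$ is genuinely present (nonzero), whereas in Lemma \ref{matrixlemma} the assumption $\det(M_\emptyset)=0$ is what removes it.
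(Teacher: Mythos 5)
Your proof is correct, but it follows a genuinely different route from the paper's. The paper argues by induction on the Hamming weight $j$: it splits off the last coordinate of the support by linearity in a single column, applies Lemma \ref{matrixlemma} to the submatrix $M_{\{i_{s+1}\}}$ to evaluate the new term, and uses the induction hypothesis (together with $det(M_{J^\prime})=0$) to collapse the old one. You instead expand $\det(M+diag_k(\bu))$ in one shot by multilinearity over all columns indexed by $J$, obtaining the full identity $\det(M+diag_k(\bu))=\sum_{S\subseteq J}\bigl(\prod_{i\in S}u_i\bigr)\det(M_S)$ and then letting the hypothesis annihilate every intermediate subset. Your subset expansion is the stronger statement: it specializes both to Lemma \ref{matrixlemma} (where $\det(M_\emptyset)=\det(M)=0$ is also assumed) and to the present lemma, so it would let the paper prove both results from a single identity; the bookkeeping you flag (that the term attached to $S$ is the complementary principal minor $\det(M_S)$ with sign $+1$) is routine and correct, since replacing the columns in $S$ by standard basis vectors forces the row set in the generalized Laplace expansion to equal $S$ with total sign $(-1)^{2\sum_{i\in S}i}=1$. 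The paper's induction buys economy --- it reuses Lemma \ref{matrixlemma} as a black box and only ever needs linearity in one column at a time --- at the cost of being less transparent about where each surviving term comes from. Your closing observation that nonsingularity of $M$ is irrelevant to the identity is also accurate and consistent with the paper, whose own proof never invokes it.
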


\begin{proof} We prove this statement by induction on $j$. For any vector $\bu$ of Hamming weight $1$, denoting by $i_1$ the position of its only nonzero coordinate, we have
$$det(M + diag_k(\bu)) = det(M) + u_{i_1} det(M_{\{i_1\}}). $$
Thus, the statement is true for $j =1$. Assume the statement is true for any vector $\bu$ with Hamming weight  $j=1,2,\cdots,s \leq k-1$. Let $\bu$ be any vector with Hamming weight $s+1$
and $support(\bu)=\{i_1,i_2,\cdots,i_{s+1} \}=J$. Let $\bu^\prime$ be the vectors with $support(\bu^{\prime})=\{i_1,i_2,\cdots,i_{s}\}=J^\prime$. Then
$$ det(M+diag(\bu))=det(M+diag_k(\bu^\prime))+ u_{i_{s+1}}det\left(M_{\{i_{s+1}\}} + diag_{k-1}(\bu^{\prime})\right).$$
Let $M^\prime=M_{\{i_{s+1}\}}$. Observe that for any subset $I$ of $J^\prime$ with $0\leq \#I \leq \#J^\prime-1 $ we have $det(M^\prime_I)=det(M_{I\cup{i_1}})=0$. Applying Lemma \ref{matrixlemma}, we get
\begin{align*}
det(M+diag(\bu))&=det(M+diag_k(\bu^\prime))+ \left(\prod_{i \in J} u_{i}\right) det\left(M_{J}\right).
\end{align*}
By induction assumption we have 
$$ det(M+diag(\bu^\prime))=det(M)+  \left(\prod_{i \in J^\prime} u_{i}\right) det\left(M_{J^\prime}\right)=det(M).$$
This implies
$$det(M+diag_k(\bld{u}))=det(M)+ \left(\prod\limits_{i \in J} u_i\right) det(M_J) .$$
This completes the proof.
\end{proof}

\vskip 30pt
\section{Arbitrary Hull dimension }

In this Section, we construct linear codes with small dimensional hull from any linear code and arbitrary dimensional hull from self-orthogonal codes. The following lemmas are used in the sequel.

\begin{lemma}\label{lemma1}
Let $q$ be a prime power with $q > 3$ and $\Co$ be an $[n,k]$ linear code over $\F_q$ with  $dim(hull(\Co))=h$ and a generator matrix $G = [I_k : P]$. Let $\bold{a}=(a_1,\cdots,a_n) \in \F_q^n$ with $a_j=1$ for any $1 \leq j \leq n$ except for one entry $a_i\not=0,1$ with $0\leq i\leq k$. Then  
$$h-1 \leq dim( Hull(\Co_\ba)) \leq h+1.$$
\end{lemma}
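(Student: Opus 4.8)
The plan is to reduce the whole statement to a single rank computation via Proposition \ref{guenda}, and then to observe that passing from $\Co$ to $\Co_\ba$ perturbs the relevant Gram matrix $GG^T$ by a matrix of rank one.

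First I would record a generator matrix of $\Co_\ba$. Write $D$ for the $n\times n$ diagonal matrix $diag_n(\ba)$. Every codeword of $\Co_\ba$ has the form $\by G D$ with $\by\in\F_q^k$, so $G_\ba:=GD$ is a generator matrix of $\Co_\ba$. Since $D$ is diagonal we have $G_\ba G_\ba^T=GD^2G^T$. The hypothesis forces $a_j^2=1$ for every $j\neq i$ and $a_i^2\neq 0$, hence $D^2=I_n+(a_i^2-1)E_{ii}$, where $E_{ii}$ is the $n\times n$ matrix with a single $1$ in position $(i,i)$. Consequently
$$G_\ba G_\ba^T=GG^T+(a_i^2-1)\,g_ig_i^T,$$
where $g_i$ denotes the $i$th column of $G$. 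Thus $G_\ba G_\ba^T$ differs from $GG^T$ by the rank-one matrix $(a_i^2-1)g_ig_i^T$.

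Next I would invoke the elementary fact that adding a matrix of rank one changes the rank by at most one, which gives
$$\rank(GG^T)-1\;\leq\;\rank(G_\ba G_\ba^T)\;\leq\;\rank(GG^T)+1.$$
Finally, Proposition \ref{guenda} translates ranks into hull dimensions: $\dim(Hull(\Co))=k-\rank(GG^T)=h$, so $\rank(GG^T)=k-h$, and likewise $\dim(Hull(\Co_\ba))=k-\rank(G_\ba G_\ba^T)$. Substituting the displayed inequality and subtracting from $k$ yields $h-1\leq\dim(Hull(\Co_\ba))\leq h+1$, as claimed.

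There is essentially no hard step here; the argument becomes mechanical once the rank-one perturbation is identified. The only points that require a little care are (i) checking that $GD$ is genuinely a generator matrix for $\Co_\ba$ and that the diagonality of $D$ produces $G_\ba G_\ba^T=GD^2G^T$, and (ii) the standard rank inequality $|\rank(A+uv^T)-\rank(A)|\leq 1$, which follows from subadditivity of rank applied both to $A+uv^T$ and to $A=(A+uv^T)-uv^T$. I would also remark that this reasoning never uses $i\leq k$ nor the standard form $G=[I_k:P]$, so the same bound in fact holds for scaling any single coordinate $1\leq i\leq n$.
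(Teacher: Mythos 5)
Your proof is correct and follows essentially the same route as the paper's: identify $G_\ba G_\ba^T$ as a rank-one perturbation of $GG^T$ and translate ranks into hull dimensions via Proposition \ref{guenda}. Your version is in fact slightly cleaner, since writing the perturbation as $(a_i^2-1)g_ig_i^T$ makes explicit that the hypotheses $i\leq k$ and $G=[I_k:P]$ are not needed, whereas the paper uses them to view the perturbation as a change of a single diagonal entry of $GG^T$.
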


\begin{proof}
Let $G^\prime$ be a generator matrices of $\Co_\ba$ by multiplying $j$-th column of $G$ by $a_j$ for $j \in \{1, 2,\cdots, n\}$. Let $M=GG^T$ and $\bu\in \F_q^n$ be a vector of length $n$ such that $u_j = a^2_j-1$ for $1 \leq j \leq n$. Then $\bu $ is a one weight and 
$$ rank(G^\prime {G^\prime}^T)=rank(M+ diag(\bu)).$$ 
Since, $\bu$ is a one weight vector then adding $diag(\bu)$   effect only one entry of $M$. Then the span of $M$ will increase or decrees by a vector. Therefore
$$ rank(M)-1\leq  rank(\ggt{G^\prime}) \leq rank(M)+1.$$
Applying Proposition \ref{guenda}, we get 
$$ k-dim(Hull(\Co))-1\leq  k-dim(Hull(\Co_{a})) \leq k-dim(Hull(\Co))+1.$$
This implies
$$h-1 \leq dim( Hull(\Co_a)) \leq h+1.$$
This complete the proof.
\end{proof}

\begin{lemma}\label{lemma2}
Let $q$ be a prime power with $q > 3$ and $\Co$ be an $[n,k]$ linear code over $\F_q$ with  $h$-hull. Then there exist $\ba=(a_1,\cdots,a_n) \in \F_q^n$ with $a_j \not=0$ for any $1 \leq j \leq n$ such that $dim(Hull(\Co_\ba))=h-1$.
\end{lemma}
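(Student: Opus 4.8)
The plan is to translate the problem into a statement about the Gram matrix $M = \ggt{G}$ and to realise the desired drop in hull dimension by a single symmetric rank-one perturbation. Assume $h\geq 1$ (otherwise the statement is vacuous). Recall from Proposition \ref{guenda} that $dim(Hull(\Co_\ba)) = k - \rank(\ggt{G'})$, where $G'$ is a generator matrix of $\Co_\ba$. If I scale only one coordinate $i$, taking $a_j = 1$ for all $j \neq i$ and $a_i \in \F_q^*$, then every entry of $\ba$ is nonzero, as required, and $G' = G\,\mathrm{diag}(\ba)$ gives
$$\ggt{G'} = G\,\mathrm{diag}(a_1^2,\dots,a_n^2)\,G^T = M + (a_i^2-1)\,g_i g_i^T,$$
where $g_i = Ge_i$ is the $i$-th column of $G$. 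So it suffices to find $i$ and $a_i \neq 0$ with $a_i^2 \neq 1$ such that this rank-one update raises $\rank(M)$ by exactly one, equivalently lowers $\dim\ker(M)$ from $h$ to $h-1$.

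Next I would select the vectors witnessing the drop. Since $dim(Hull(\Co))=h\geq 1$, the kernel of $M$ is nonzero; pick $0\neq y \in \ker(M)$ and let $\bc = y^T G$ be the corresponding codeword, which lies in $Hull(\Co)$ and is nonzero (as $G$ has full row rank). Since $\bc \neq 0$ it has some nonzero coordinate, say $\bc_i \neq 0$; note $y^T g_i = (y^TG)_i = \bc_i \neq 0$, so in particular $g_i \neq 0$. Because $q>3$, there is $a_i \in \F_q^*$ with $a_i^2 \neq 1$, i.e. $u := a_i^2 - 1 \neq 0$. Writing $M' = M + u\,g_ig_i^T$, a short computation gives $M'y = My + u(g_i^Ty)g_i = u\,\bc_i\,g_i \neq 0$, so $y$ has left the kernel.

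The crux, and the only place real care is needed, is to show that the kernel drops by \emph{exactly} one rather than staying the same; a rank-one update already forbids a drop of more than one. For this I would prove $\ker(M') \subseteq \ker(M)$. Given $M'z = 0$ we have $Mz = -u(g_i^Tz)g_i$; pairing with $y$ and using the symmetry of $M$ together with $My = 0$ yields
$$0 = (My)^Tz = y^TMz = -u(g_i^Tz)(y^Tg_i) = -u\,\bc_i\,(g_i^Tz).$$
Since $u\bc_i \neq 0$ this forces $g_i^Tz = 0$, whence $Mz = 0$ and $z \in \ker(M)$. Thus $\ker(M') \subsetneq \ker(M)$ (strict, since $y$ witnesses the failure), so $\dim\ker(M') \leq h-1$; combined with the rank-one lower bound $\dim\ker(M') \geq h-1$ this gives equality. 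By Proposition \ref{guenda} this means $dim(Hull(\Co_\ba)) = h-1$, completing the argument. I expect this inclusion $\ker(M')\subseteq\ker(M)$ to be the main obstacle: it is precisely what rules out the possibility that annihilating one hull vector silently produces another, which is exactly why the coarse ``change by at most one'' bound of Lemma \ref{lemma1} does not suffice on its own.
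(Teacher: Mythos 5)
Your proof is correct, but it takes a genuinely different route from the paper. The paper's proof invokes Theorem \ref{carlthe} (the nontrivial fact from \cite{carlet2018} that some $\ba'$ makes $\Co_{\ba'}$ LCD), rescales the first $k$ coordinates one at a time, and extracts the existence of a step with hull dimension exactly $h-1$ from a discrete intermediate-value argument based on the ``changes by at most one'' bound of Lemma \ref{lemma1}; it is short but non-constructive about which partial rescaling works, and it leans on Lemma \ref{lemma1}, whose own justification in the paper is somewhat informal. Your argument is self-contained and explicit: you identify the coordinate to rescale as any position in the support of a nonzero hull codeword $\bc = y^TG$ with $y\in\ker(M)$, and you prove the exact drop by showing $\ker(M+u\,g_ig_i^T)=\ker(M)\cap\{z: g_i^Tz=0\}$, a proper hyperplane section of $\ker(M)$. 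This buys several things the paper's proof does not: it avoids any appeal to Theorem \ref{carlthe}, it shows that a \emph{single} coordinate rescaling always suffices, it does not require the generator matrix to be in standard form or the scaled coordinate to lie among the first $k$, and the symmetric rank-one computation makes rigorous exactly the point where Lemma \ref{lemma1} is vague. The only (correctly handled) hypotheses you use are $h\geq 1$ and $q>3$, the latter to guarantee some $a_i\in\F_q^*$ with $a_i^2\neq 1$.
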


\begin{proof}
Without loss of generality we may assume that the generator matrix of $\Co$ is in standard form.  From the proof of Theorem \ref{carlthe} \cite{carlet19}, there exist $\ba^\prime=(a_1,\cdots,a_k,1,\cdots,1) \in \F_q^n$ with $a_j \not =0$ for $0 \leq j \leq k$ and $dim(Hull(\Co_{\ba^\prime}))=0$. Let $\ba_1=(a_1,1,\cdots,1)$ and $\Co_1=\Co_{\ba_1}$, then by Lemma \ref{lemma1} 
 $$ dim(Hull(\Co_1))=h-1 \quad \text{ or } \quad dim(Hull(\Co_1)) \geq h.$$
If $dim(Hull(\Co_1))=h-1$, then the proof is completed. If $dim(Hull(\Co_1)) \geq h$, let $\ba_2=(a_1,a_2,1,\cdots,1)$ and $\Co_2=\Co_{1_{(1,a_2,1\cdots)}}=\Co_{\ba_2}$. Applying Lemma \ref{lemma1} again
 $$ dim(Hull(\Co_{2}))=h-1 \quad \text{ or } \quad dim(Hull(C_{2})) \geq h$$
and using the same argument again. If $dim(Hull(C_{\ba_i}))\ge h$ for all $1 \leq i\leq k$, where $\ba_1=(a_1,1,\cdots,1)$, $\ba_2=(a_1,a_2,1,\cdots,1),\: \cdots,\:\ba_k=(a_0,a_1,\cdots,a_k,1,\cdots,1)=a^\prime$ this will contradict the existence of $a^\prime$ with $dim(Hull(\Co_{a^\prime}))=0$. Therefore, there exist $\ba=(a_1,\cdots,a_n) \in \F_q^n$ with $a_j\not=0$ for $ 1 \leq j\leq n$ and $dim(Hull(C_a))=h-1$.
\end{proof}

\begin{theorem}\label{theorm31}
Let $q$ be a power of a prime with $q > 3$ and $\Co$ be $[n, k, d]$-linear code over $\F_q$ with $dim(Hull(\Co))=h$. Then, there exists $\Co_j$ codes with the same parameters as $\Co$ and
$$ dim(Hull(\Co_j))=h $$
for any $0\leq j \leq h$.
\end{theorem}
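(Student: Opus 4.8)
First, I would note that the stated equality $dim(Hull(\Co_j))=h$ should read $dim(Hull(\Co_j))=j$, in accordance with the abstract and with the one-step reduction of Lemma \ref{lemma2}; this is the statement I would prove. The plan is to build a descending chain of monomially equivalent codes
$$\Co=\Co_h,\ \Co_{h-1},\ \ldots,\ \Co_1,\ \Co_0,$$
in which each successive code has hull dimension exactly one smaller than its predecessor, by repeatedly invoking Lemma \ref{lemma2}.

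Concretely, I would set $\Co_h=\Co$, which has hull dimension $h$. Then, for $j$ running from $h$ down to $1$, I would apply Lemma \ref{lemma2} to the code $\Co_j$ (of hull dimension $j$): the lemma furnishes a vector $\ba^{(j)}=(a^{(j)}_1,\ldots,a^{(j)}_n)\in\F_q^n$ with $a^{(j)}_t\neq 0$ for all $1\le t\le n$ such that the column-scaled code $\Co_{j-1}:=(\Co_j)_{\ba^{(j)}}$ satisfies $dim(Hull(\Co_{j-1}))=j-1$. Iterating this produces the whole chain, with $dim(Hull(\Co_j))=j$ for every $0\le j\le h$.

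It then remains to confirm that each $\Co_j$ has the same parameters $[n,k,d]$ as $\Co$. Every map $\bc\mapsto(a_1c_1,\ldots,a_nc_n)$ with all $a_t\neq 0$ is a diagonal monomial transformation, and a composition of such maps is again a diagonal scaling by an everywhere-nonzero vector; hence $\Co_j=\Co_{\bld{b}}$ for some $\bld{b}\in(\F_q^*)^n$, so $\Co_j$ is monomial equivalent to $\Co$. Since monomial equivalence preserves length and dimension, and scaling each coordinate by a nonzero scalar preserves the Hamming weight of every codeword (hence the minimum distance), each $\Co_j$ is an $[n,k,d]$ code. This yields the desired family.

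The argument carries essentially no obstacle beyond Lemma \ref{lemma2} itself: the theorem is a telescoping of that single-step reduction. The only point I would check carefully is that Lemma \ref{lemma2} is legitimately applicable at each stage. This holds because that lemma applies to an arbitrary $[n,k]$ code over $\F_q$ with $q>3$ of any positive hull dimension, and at step $j$ the current code $\Co_j$ indeed has hull dimension $j\ge 1$. The remaining bookkeeping is simply to track the composite scaling vector $\bld{b}$ in order to certify the common parameters.
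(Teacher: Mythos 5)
Your proposal is correct and follows essentially the same route as the paper: a downward induction (telescoping) from $j=h$ to $j=0$, applying Lemma \ref{lemma2} at each step to drop the hull dimension by exactly one, with the observation that the resulting codes are monomial equivalent to $\Co$ and hence share its parameters. You also rightly note the typo that the conclusion should read $dim(Hull(\Co_j))=j$, and your explicit check that the composite scaling preserves $[n,k,d]$ is a small but welcome addition the paper leaves implicit.
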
 
\begin{proof} We prove this theorem by induction on $j$. Obviously, the statement holds for $j=h$. Assume, by induction, that the statement holds for all $h, h-1, \cdots,j$. Let $\Co_j$ be an $[n,k,d]$ linear code with $dim(Hull(\Co_j))=j$. By Lemma \ref{lemma2} there exist a monomial equivalent code $\Co_{j-1}$ with $dim(Hull(C_{j-1}))=dim(Hull(C_{j}))-1=j-1$. This completes the proof.
\end{proof}

\begin{corollary}\label{corollaryself}
Let $q$ be a power of a prime with $q > 3$ and $\Co$ be an $[n, k, d]$ self-orthogonal (self-dual) code over $\F_q$. Then, there exist linear codes with the same parameters as $\Co$ with arbitrary Hull.
\end{corollary}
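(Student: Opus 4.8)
The plan is to reduce the statement directly to Theorem \ref{theorm31} by first computing the hull dimension of a self-orthogonal code. First I would observe that if $\Co$ is self-orthogonal, then by definition $\Co \subseteq \Co^\perp$, so that
$$\text{hull}(\Co) = \Co \cap \Co^\perp = \Co,$$
and hence $dim(Hull(\Co)) = k$. In the self-dual case the same identity $\Co = \Co^\perp$ yields $\text{hull}(\Co) = \Co$ and again $dim(Hull(\Co)) = k$ (now with $k = n/2$). Thus a self-orthogonal code attains the largest hull dimension possible for an $[n,k]$ code, since the hull is a subspace of $\Co$ and therefore has dimension at most $k$.

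Next I would invoke Theorem \ref{theorm31} with $h = k$. That theorem produces, for every integer $0 \le j \le h = k$, a code $\Co_j$ with the same parameters $[n,k,d]$ as $\Co$ and with $dim(Hull(\Co_j)) = j$. Since $j$ ranges over all of $\{0, 1, \cdots, k\}$ and no $[n,k]$ code can have hull dimension exceeding $k$, this realizes every admissible hull dimension, which is the precise meaning of \emph{arbitrary Hull} in the statement.

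There is no serious obstacle here: the entire content is the observation that self-orthogonality forces $h = k$, the maximal value, so that the range $0 \le j \le h$ furnished by Theorem \ref{theorm31} is in fact the full range of possible hull dimensions. The only point requiring a moment of care is the interpretation of \emph{arbitrary}: I would make explicit that hull dimensions are bounded above by $k$, so that producing all values in $\{0, \cdots, k\}$ is genuinely exhaustive rather than merely a large subfamily of the theoretically attainable ones.
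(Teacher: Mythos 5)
Your proposal is correct and matches the paper's intended argument: the corollary is stated without proof precisely because, as noted in the preliminaries, an $[n,k]$ self-orthogonal code has $\text{hull}(\Co)=\Co$ and hence $h=k$, so Theorem \ref{theorm31} immediately yields codes of every hull dimension $j$ with $0\leq j\leq k$. Your added remark that $k$ is the maximal possible hull dimension, so the range is genuinely exhaustive, is a worthwhile clarification but not a departure from the paper's route.
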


\begin{theorem}[\cite{ba2019,st2006}]\label{theoremlong}
There exist long $q$-ary self-dual codes which meet the Gilbert-Varshamov bound for odd $q$.
\end{theorem}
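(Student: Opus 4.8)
The plan is to prove this by a random-coding (averaging) argument over the ensemble of all Euclidean self-dual codes of a fixed length $n$ over $\F_q$, the classical route to placing self-dual codes on the Gilbert--Varshamov (GV) bound. Every self-dual code has dimension $n/2$, so its rate is pinned at $R=1/2$; consequently ``meeting the GV bound'' means exhibiting, for a sequence of lengths $n\to\infty$, self-dual codes whose minimum distance $d$ satisfies $d/n\ge\delta$ for every fixed $\delta$ below the threshold $\delta_0$ determined by $H_q(\delta_0)=1/2$, where $H_q$ is the $q$-ary entropy function. The whole argument rests on showing that the expected number of nonzero low-weight codewords in a uniformly random self-dual code tends to $0$, which forces the existence of at least one self-dual code having no such codewords.

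First I would record the mass formula for the number $N_q(n)$ of self-dual codes of length $n$; for $q$ odd and $n$ in the admissible congruence class this is a product of the form $\prod_{i=1}^{n/2-1}(q^i+\epsilon_i)$ with signs $\epsilon_i\in\{+1,-1\}$ dictated by $q\bmod 4$. Next, fix a nonzero vector $\bx$; since every codeword of a self-dual code is orthogonal to itself, only self-orthogonal $\bx$ (those with $\br{\bx,\bx}=0$) can belong to such a code. For such $\bx$, the self-dual codes containing $\bx$ are in bijection with the self-dual codes of the $(n-2)$-dimensional quotient space $\br{\bx}^\perp/\br{\bx}$, so their number is $N_q(n-2)$. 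Taking the ratio gives
$$ \Pr[\bx\in\Co]\;=\;\frac{N_q(n-2)}{N_q(n)}\;=\;\frac{1}{q^{n/2-1}+\epsilon_{n/2-1}}\;\approx\;q^{\,1-n/2}. $$

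With this probability in hand, the expected number of nonzero codewords of weight at most $d-1$ is bounded, summing only over self-orthogonal vectors, by
$$ E\;\le\;\Big(\sum_{w=1}^{d-1}\binom{n}{w}(q-1)^{w}\Big)\,q^{\,1-n/2}\;\le\;q^{\,nH_q(\delta)}\,q^{\,1-n/2}\;=\;q^{\,n(H_q(\delta)-1/2)+1}, $$
where $\delta=(d-1)/n$ and the middle inequality is the standard Hamming-ball volume estimate. Whenever $\delta<\delta_0$, i.e.\ $H_q(\delta)<1/2$, the exponent is negative for all large $n$, so $E\to0$; since the number of bad codewords in a given code is a nonnegative integer, some self-dual code of length $n$ then has minimum distance exceeding $\delta n$. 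Letting $\delta\uparrow\delta_0$ along a sequence $n\to\infty$ produces the asserted long self-dual codes meeting the GV bound.

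The main obstacle is the exact enumeration underlying the two counting steps, not the asymptotic estimate, which is routine once the counts are in place. The mass formula and the conditional count both hinge on the residue of $q$ modulo $4$, which governs both the existence of self-dual codes of a given length and the precise signs $\epsilon_i$; and the expectation must be set up so that only self-orthogonal vectors are summed, since a vector with $\br{\bx,\bx}\neq0$ lies in no self-dual code and would otherwise be counted with a spurious probability. These combinatorial identities, together with the averaging argument above, are exactly what is carried out in \cite{ba2019,st2006}.
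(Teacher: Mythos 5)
Your argument is correct in outline, but it is not the route the paper takes: the paper offers no proof of this theorem at all, quoting it from \cite{ba2019,st2006}, where it is obtained by algebraic-geometric constructions (Stichtenoth's transitive and self-dual codes attaining the Tsfasman--Vladut--Zink bound via towers of function fields, and Bassa--Stichtenoth's self-dual codes that in fact \emph{exceed} the Gilbert--Varshamov bound). What you have reconstructed instead is the classical mass-formula averaging argument of Pless--Pierce type: only isotropic vectors can lie in a self-dual code, a fixed nonzero isotropic $\bx$ lies in $N_q(n-2)$ of the $N_q(n)$ self-dual codes via the correspondence with maximal totally isotropic subspaces of $\br{\bx}^\perp/\br{\bx}$, hence $\Pr[\bx\in\Co]\approx q^{1-n/2}$, and a first-moment bound eliminates all weights below $nH_q^{-1}(1/2)$. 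This is a valid, self-contained and far more elementary proof of exactly the statement as quoted (rate pinned at $1/2$, relative distance tending to $H_q^{-1}(1/2)$ for odd $q$); what the cited algebraic-geometric route buys is strictly stronger conclusions (attaining TVZ, beating GV for suitable square $q$) and explicit constructions, at the cost of heavy function-field machinery, whereas your ensemble argument is nonconstructive. Two details in your sketch deserve tightening. First, for $q\equiv 3\pmod 4$ self-dual codes exist only when $4\mid n$, so $n-2$ is not an admissible length in the coding sense; the quantity you call $N_q(n-2)$ must be read as the number of Lagrangian subspaces of the induced form on $\br{\bx}^\perp/\br{\bx}$, which is split because a hyperbolic plane has been split off the ambient split form --- with that reading the ratio $N_q(n-2)/N_q(n)\approx q^{1-n/2}$ is correct. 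Second, the volume estimate $\sum_{w\le d-1}\binom{n}{w}(q-1)^w\le q^{nH_q(\delta)}$ requires $\delta\le 1-1/q$, which is harmless here since $H_q^{-1}(1/2)<1-1/q$, but should be said. With those repairs your proof is sound, and it is essentially the argument in the older literature (Pless--Pierce; MacWilliams--Sloane--Thompson for $q=2$) rather than that of the sources the paper cites.
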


Combine  Theorem \ref{theoremlong} and Corollary \ref{corollaryself} we get the following results.

\begin{theorem}\cite{hao}
Let $h$ be a fixed positive integer. There exist long $h$-hull codes which meet the Gilbert-Varshamov bound.
\end{theorem}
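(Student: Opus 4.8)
The plan is to combine the two preceding ingredients directly, the only subtlety being that the fixed target hull dimension $h$ must be shown to lie within the reach of a sufficiently long self-dual code. First I would invoke Theorem \ref{theoremlong} to obtain, for odd $q$ (in particular for some $q > 3$), an infinite family of long $q$-ary self-dual codes meeting the Gilbert–Varshamov bound. Since each such $[n,k]$ code is self-dual, it satisfies $\Co = \Co^\perp$, hence $\text{hull}(\Co) = \Co \cap \Co^\perp = \Co$ and $dim(Hull(\Co)) = k = n/2$. Because the family is long, $k$ grows without bound, so for the given fixed positive integer $h$ we may restrict attention to members of the family with $k \geq h$.

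Next I would apply Corollary \ref{corollaryself} (equivalently, Theorem \ref{theorm31} applied to the self-dual code as input). Since the input satisfies $dim(Hull(\Co)) = k \geq h$, the theorem produces a monomial equivalent code $\Co_h$ with $dim(Hull(\Co_h)) = h$ and with the same parameters $[n,k,d]$ as $\Co$. Monomial equivalence preserves length, dimension, and minimum distance, so $\Co_h$ inherits every parameter of the original self-dual code, and in particular remains long.

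Finally I would note that because $\Co$ meets the Gilbert–Varshamov bound and $\Co_h$ shares its parameters $[n,k,d]$, the code $\Co_h$ also meets that bound. Letting the self-dual code range over the infinite family of Theorem \ref{theoremlong} then yields an infinite family of $h$-hull codes, each meeting the Gilbert–Varshamov bound, which is precisely the assertion.

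The sole point requiring care—and the only potential obstacle—is the ordering of quantifiers: $h$ is fixed first, whereas Theorem \ref{theorm31} delivers a code of hull dimension $j$ only in the range $0 \leq j \leq k$ set by the hull dimension $k$ of the input self-dual code. One must therefore guarantee arbitrarily long self-dual codes whose half-dimension $k$ exceeds $h$, so that $h$ falls in the admissible range. This is immediate from the word ``long'' in Theorem \ref{theoremlong} and need only be recorded explicitly; there is no genuine difficulty beyond this bookkeeping.
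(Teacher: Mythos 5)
Your proposal is correct and follows essentially the same route as the paper, which simply combines Theorem \ref{theoremlong} with Corollary \ref{corollaryself} (the paper gives no further detail). Your explicit attention to the quantifier ordering --- choosing an odd $q>3$ and taking the self-dual codes long enough that $k=n/2\geq h$ --- is a worthwhile piece of bookkeeping that the paper leaves implicit.
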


Now we give a construction of maximal-entanglement EAQECCs from any linear code over $\F_q$ with $q>3$. Our result significantly improve the results in \cite{carlet19,hao}.

\begin{proposition}[\cite{burn2006}]
Let $\Co$ be a classical $[n, k, d]$ linear code over $\F_q$. Then there exist $[[n, k-dim(Hull(C)), d; n-k-dim(Hull(C))]]$ EAQECCs over $\F_q$. Further, if $\Co$ is MDS then the EAQECCs is also MDS.
\end{proposition}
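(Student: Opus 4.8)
The plan is to build an entanglement-assisted stabilizer code directly from the classical code $\Co$ in the CSS fashion, and then to read off the three structural parameters---the number of ebits $c$, the number of logical symbols, and the distance---from the Gram matrix $\ggt{H}$, invoking Proposition \ref{guenda} to convert rank information into hull dimension. Since the statement is attributed to \cite{burn2006}, the real content I want to expose is that the entanglement requirement is governed precisely by $\rank(\ggt{H})$, which Proposition \ref{guenda} identifies with $n-k-dim(Hull(\Co))$.

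First I would fix a parity check matrix $H$ of $\Co$, of size $(n-k)\times n$, and form the symplectic check matrix $\left(\begin{smallmatrix} H & 0 \\ 0 & H\end{smallmatrix}\right)$ whose $2(n-k)$ rows represent $n-k$ $X$-type and $n-k$ $Z$-type generators. The $X$--$X$ and $Z$--$Z$ symplectic products all vanish, while the symplectic product of the $i$-th $X$-generator with the $j$-th $Z$-generator is exactly the $(i,j)$ entry of $\ggt{H}$. Hence the symplectic Gram matrix of this generating set is anti-block-diagonal with blocks $\pm\ggt{H}$, so its rank equals $2\,\rank(\ggt{H})$.

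Next I would apply the entanglement-assisted stabilizer formalism: the generators decompose into $c$ symplectic (anticommuting) pairs together with an isotropic, mutually commuting remainder, where $2c$ is the rank of the symplectic Gram matrix. Thus $c=\rank(\ggt{H})$, and Proposition \ref{guenda} gives at once $c=n-k-dim(Hull(\Co))$, the promised number of ebits. Counting logical symbols is then bookkeeping: writing $s$ for the number of isotropic generators, we have $s+2c=2(n-k)$ and logical dimension $n-s-c$, which collapses to $2k-n+c=k-dim(Hull(\Co))$ after substituting $c$. The distance claim follows because in this CSS-type construction the minimum weight of a nontrivial logical operator coincides with the classical minimum distance $d$ of $\Co$. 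Finally, for the MDS addendum I would check the entanglement-assisted Singleton bound $2(d-1)\le n-(k-dim(Hull(\Co)))+c$: substituting $c=n-k-dim(Hull(\Co))$ collapses the right-hand side to $2(n-k)$, which equals $2(d-1)$ exactly when $\Co$ is MDS, so equality holds and the EAQECC is MDS.

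The main obstacle is not the arithmetic but justifying that the ebit count is an invariant governed by $\rank(\ggt{H})$ rather than by a basis-dependent quantity; this is precisely where Proposition \ref{guenda} does the essential work, guaranteeing that $\rank(\ggt{H})$---and therefore all three EAQECC parameters---depends only on $\Co$ and not on the particular choice of $H$.
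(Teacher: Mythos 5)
The paper does not prove this proposition at all: it is stated as a citation to the entanglement-assisted stabilizer literature (\cite{burn2006}, and in this precise $q$-ary form essentially \cite{guenda2018}), so there is no in-paper argument to compare yours against. Judged on its own, your outline is the standard derivation and the arithmetic is correct: the two-block check matrix built from $H$, the observation that the symplectic Gram matrix is anti-block-diagonal with blocks $\pm HH^T$ so that its rank is $2\,\rank(HH^T)$, the identification $c=\rank(HH^T)=n-k-\dim(Hull(\Co))$ via Proposition \ref{guenda}, the count $2k-n+c=k-\dim(Hull(\Co))$ of logical symbols, and the verification that the entanglement-assisted Singleton bound is met with equality exactly when $d=n-k+1$ are all right.

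Two steps are asserted rather than proved, and they carry the real content of the cited result. First, the decomposition of the $2(n-k)$ generators into $c$ anticommuting (hyperbolic) pairs plus an isotropic remainder, with $2c$ equal to the rank of the symplectic Gram matrix, is the symplectic Gram--Schmidt theorem of the EA formalism; you invoke it as a black box, which is acceptable for a result of this kind but should be flagged as the load-bearing external input. Second, the claim that the minimum distance of the resulting EAQECC is exactly $d$ is stated in one sentence; the usual justification is that the syndrome measurement reproduces the classical syndrome of $\Co$ for both the $X$- and $Z$-parts of an error, so any error of weight less than $d$ with trivial syndrome acts trivially, and this deserves at least a line of argument. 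With those two points acknowledged, your proof is a faithful and correct reconstruction of the result the paper quotes.
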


\begin{corollary}
Let $\Co$ be an $[n, k, d]$ linear code over $\F_q$ with $q > 3$ and $dim(Hull(\Co))=h$. Then there exist $[[n, k-l, d; n-k-l]]_q$ EAQECCs for any $1\leq l\leq h$.
\end{corollary}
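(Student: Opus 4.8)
The plan is to obtain the corollary as an immediate consequence of Theorem \ref{theorm31} together with the EAQECC construction stated in the preceding Proposition (from \cite{burn2006}). The key observation is that Theorem \ref{theorm31} lets us \emph{tune} the hull dimension to any value between $0$ and $h$ while keeping the code parameters fixed, and the Proposition then converts a code of a given hull dimension into an EAQECC whose parameters are governed precisely by that hull dimension.

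First I would fix an integer $l$ with $1 \leq l \leq h$. Since in particular $0 \leq l \leq h$, Theorem \ref{theorm31} (applied with $j=l$) supplies a code $\Co_l$ having the same parameters $[n,k,d]$ as $\Co$ and with $dim(Hull(\Co_l)) = l$. Here it is worth recording explicitly that $\Co_l$ is monomially equivalent to $\Co$, so that the length $n$, the dimension $k$, and the minimum distance $d$ are all genuinely preserved; monomial equivalence fixes the Hamming metric, which is what guarantees that the distance appearing in the final EAQECC is still $d$.

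Next I would apply the Proposition of \cite{burn2006} to the code $\Co_l$. Its conclusion produces an
$$[[\,n,\; k - dim(Hull(\Co_l)),\; d;\; n - k - dim(Hull(\Co_l))\,]]$$
EAQECC over $\F_q$. Substituting $dim(Hull(\Co_l)) = l$ turns this into an $[[n,\,k-l,\,d;\,n-k-l]]_q$ EAQECC. Since $l$ was an arbitrary integer in the range $1 \leq l \leq h$, this establishes the existence of such an EAQECC for every such $l$, which is exactly the assertion of the corollary.

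I do not expect a serious obstacle here: the entire nontrivial content lives in Theorem \ref{theorm31}, whose job is to realize all intermediate hull dimensions $0,1,\dots,h$ by monomially equivalent codes. The only point requiring a moment's care is the bookkeeping just mentioned — namely that equivalence preserves $[n,k,d]$ so that the minimum distance in the EAQECC remains $d$ rather than degrading — but this is automatic from the definition of monomial equivalence used in Section 2. Everything else is a direct substitution of $dim(Hull(\Co_l))=l$ into the formula of the Proposition.
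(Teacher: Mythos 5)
Your proof is correct and follows exactly the route the paper intends: apply Theorem \ref{theorm31} to obtain a monomially equivalent $[n,k,d]$ code with hull dimension $l$, then feed it into the EAQECC construction of the preceding Proposition. The paper leaves this corollary without an explicit proof precisely because it is this immediate combination, so there is nothing to add.
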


\begin{example} Let $\Co$ be $[7, 3, 2]$ self-orthogonal linear code over $\F_5$ with the following generator matrix
$$ G= \begin{pmatrix}
1 &0 &0 &0 &0 &2 &0 \\
0 &1 &0 &2 &2 &0 &4 \\
0 &0 &1 &1 &3 &0 &3
\end{pmatrix} $$ 
Let $a_0=(2,2,2,1,1,1,1)$, $a_1=(2,2,1,1,1,1,1)$ $a_2=(2,1,1,1,1,1,1)$ $a_3=(1,1,1,1,1,1,1)$. Then $\Co_{a_i}$ are linear codes with the same parameters as $\Co$ and 
$$dim(Hull(\Co_{a_i})) =i, $$
for $0\leq i\leq 3$.

\end{example}

\section{One Dimensional Hull and Pure LCD code}

In Section 3 we showed that for any $h$-hull linear code and any integer $j$ with $0<j\leq h$, there exist an equivalent code $\Co_j$ with $dim(Hull(\Co_j))=j$. A natural question rise; is this operation invertible? For any LCD code $\Co$ with parameters $[n,k,d]$, does a monomial equivalent code $\Co_1$ to $\Co$ with $dim(Hull(\Co))=1$ exist?. It turns out that, there exist linear codes where all their monomial equivalent codes are LCD. We introduce the following definition.
\begin{definition}
A linear code $\Co$ is called \textbf{ Pure LCD} code if and only if all its monomial equivalent codes are LCD. i.e
$$ dim(Hull(\Co_a))=0, \quad \text{ for all }\,  \ba\in {(\F_q^*)^n}.$$
\end{definition}

\begin{remark}
Note that the hull of a linear code over a finite field $\F_q$ with $q=2$ or $q=3$ is an invariant of equivalent codes. Therefore, a linear code $\Co$ over $\F_q$ ($q \in \{2,3\}$) is pure LCD if and only if $\Co$ is LCD. 
\end{remark}

The following example shows that pure LCD code over $\F_q$ ($q>3$) does exist.

\begin{example} Let $\Co$ be the code over $\F_5$ with generator matrix
$$ G=\begin{pmatrix}
1 &0 &0 &0 &0 &4 \\
0 &1 &0 &2 &4 &0 \\
0 &0 &1 &0 &0 &3
\end{pmatrix}.$$
Using Magma we can check that all monomial equivalent codes to $\Co$ are LCD codes. Hence, $\Co$ is pure LCD code.
\end{example}

\begin{theorem}\label{th41}
Let $q$ be a prime power with $q>3$ and $\Co$ be $[n, k, d]$ LCD code over $\F_q$ with generator matrix $G = [I_k :
P]$. Let $M=GG^T$. Assume that there exist $1\leq i\leq k$ with 
$-det(M)/det(M_{\{i\}})+1$ is a nonzero square, then $\Co$ is not pure LCD code. Furthermore, let $\ba=(a_1,\cdots,a_n)$ with $a_i^2=-det(M)/det(M_i)+1$ and $a_j=1$ for all $ 1\leq j\leq n$ and $j\not=i$, then $\Co_\ba$ is one dimensional hull. 
\end{theorem}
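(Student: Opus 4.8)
The plan is to track how the matrix $GG^T$ changes when we scale a single coordinate, and to show that the hypothesis on $i$ forces the scaled code to have exactly a one-dimensional hull. I would start from Proposition~\ref{guenda}, which tells me that $\dim(\mathrm{Hull}(\Co_\ba)) = k - \mathrm{rank}(G'\,{G'}^T)$, where $G'$ is a generator matrix of $\Co_\ba$. The natural choice is $G' = G\,\mathrm{diag}_n[\ba]$, so that $G'{G'}^T = G\,\mathrm{diag}_n[\ba^2]\,G^T = GG^T + G\,\mathrm{diag}_n[\bu]\,G^T$, where $u_j = a_j^2 - 1$. Since only the $i$-th coordinate is scaled and $a_j = 1$ elsewhere, $\bu$ is supported on a single position, and I would want to reduce $G'{G'}^T$ to the clean form $M + c\,E_{ii}$ or, better, exploit that $G = [I_k : P]$ so that scaling a coordinate among the first $k$ columns produces exactly $M + u_i\,\mathrm{diag}_k[\bold{e}_i]$ on the generator side.

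**Next I would** invoke the determinant expansion. Because $\Co$ is LCD, Theorem~\ref{th23} guarantees $\det(M) \neq 0$, so $M$ is nonsingular and the second lemma in the Preliminaries (the non-singular analogue of Lemma~\ref{matrixlemma}) applies directly to a weight-one perturbation $\bu = u_i \bold{e}_i$: it gives
$$ \det(M + \mathrm{diag}_k(\bu)) = \det(M) + u_i\,\det(M_{\{i\}}). $$
Substituting the prescribed value $a_i^2 = -\det(M)/\det(M_{\{i\}}) + 1$, so that $u_i = a_i^2 - 1 = -\det(M)/\det(M_{\{i\}})$, this determinant collapses to $\det(M) + \bigl(-\det(M)/\det(M_{\{i\}})\bigr)\det(M_{\{i\}}) = 0$. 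Hence $G'{G'}^T$ is singular, so $\mathrm{rank}(G'{G'}^T) \leq k-1$, giving $\dim(\mathrm{Hull}(\Co_\ba)) \geq 1$, and in particular $\Co_\ba$ is not LCD, which already shows $\Co$ is not pure LCD.

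**To pin the hull dimension to exactly $1$**, I would argue that the rank drops by exactly one, not more. The cleanest route is Lemma~\ref{lemma1}: since $\ba$ differs from the all-ones vector in a single coordinate $a_i \neq 0,1$, that lemma yields $h - 1 \leq \dim(\mathrm{Hull}(\Co_\ba)) \leq h + 1$ with $h = \dim(\mathrm{Hull}(\Co)) = 0$, forcing $\dim(\mathrm{Hull}(\Co_\ba)) \leq 1$. Combined with the lower bound $\dim(\mathrm{Hull}(\Co_\ba)) \geq 1$ from the determinant computation, I conclude $\dim(\mathrm{Hull}(\Co_\ba)) = 1$ exactly.

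**The main obstacle I anticipate** is bookkeeping around the indexing $0 \leq i \leq k$ versus $1 \leq i \leq k$ and the precise identification of $G'{G'}^T$ with a weight-one diagonal perturbation of $M$ in the $[I_k : P]$ coordinates; one must be careful that scaling the $i$-th coordinate (for $i \le k$) truly acts as $M \mapsto M + u_i\,\mathrm{diag}_k(\bold{e}_i)$ rather than affecting off-diagonal entries, which relies on the identity block of $G$. A secondary subtlety is verifying that $\det(M_{\{i\}}) \neq 0$, so that the defining expression $-\det(M)/\det(M_{\{i\}}) + 1$ is well-defined; this should follow from the hypothesis implicitly (the quantity is assumed to be a nonzero square, hence in particular defined), but I would state it explicitly. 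Everything else is a direct substitution once the two lemmas are in place.
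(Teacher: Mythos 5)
Your proposal is correct and follows essentially the same route as the paper: compute $\det(G'{G'}^T)=\det(M)+(a_i^2-1)\det(M_{\{i\}})=0$ via the weight-one determinant expansion, conclude $\Co_\ba$ is not LCD by Theorem~\ref{th23}, and then use Lemma~\ref{lemma1} (with $h=0$) together with Proposition~\ref{guenda} to force the rank to be exactly $k-1$ and hence the hull to be one-dimensional. Your explicit remarks on the implicit nonvanishing of $\det(M_{\{i\}})$ and on why the perturbation is purely diagonal (thanks to the $I_k$ block) are careful additions but do not change the argument.
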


\begin{proof}
Let $G^\prime$ be a generator matrix of $\Co_{a}$ obtained from
$G$ by multiplying its $i$-th column by $a_i$. Then  
$$det(G^\prime {G^\prime}^T)=det(M)+(a_i^2-1)det(M_{\{i\}})=0.$$
Therefore, $\Co_{a}$ is not LCD by Theorem \ref{th23}. Furthermore  
$$rank(G^\prime {G^\prime}^T)=rank(M+diag(0,\cdots,0, a_i^2-1,0,\cdots,0)).$$
By Lemma \ref{lemma1}, $rank(\ggt{G^\prime})=k$ or $=k-1$. Since $det(\ggt{G^\prime})=0$ then $rank(\ggt{G^\prime})=k-1$ and $\Co_{a}$ is one dimensional hull by Proposition \ref{guenda}.
\end{proof}

\begin{remark} Using a computer we check that most codes that are not pure LCD codes meet the requirement of Theorem \ref{th41}. Therefore, combining Theorem \ref{theorm31} and Theorem \ref{th41} is very efficient for constructing one-dimensional Hull codes from any linear codes.
\end{remark}

\begin{lemma}\label{lemma46}
Let $q = 2^t$ with $t>1$ and $\Co$ be $[n, k]$ linear code over $\F_q$ with generator matrix $G = [I_k :
P]$. Let $M=GG^T$. Assume that there exist $\bu \in \F_q^n$ with $u_i \in \F_q/\{1\}$ and 
$$det(M+diag(\bu))=0.$$
Then $\Co$ is not pure LCD code.   
\end{lemma}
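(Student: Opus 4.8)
The plan is to turn the purely algebraic hypothesis $det(M+diag(\bu))=0$ into a genuine monomial equivalence using \emph{nonzero} scalars, and then invoke Massey's criterion. The decisive feature is that over $\F_{2^t}$ squaring is a bijection, which is exactly what lets us realize the prescribed diagonal perturbation.

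First I would recall the block computation underlying the proof of Lemma \ref{lemma1}. Writing $G=[I_k:P]$ and scaling only the first $k$ columns of $G$ by scalars $a_1,\ldots,a_k$ (leaving the remaining $n-k$ columns fixed) produces a generator matrix $G^\prime=[\,diag_k(a_1,\ldots,a_k):P\,]$ of the monomial-equivalent code $\Co_\ba$ with $\ba=(a_1,\ldots,a_k,1,\ldots,1)$. Since $M=GG^T=I_k+PP^T$, a direct computation gives
$$ \ggt{G^\prime}=diag_k(a_1^2,\ldots,a_k^2)+PP^T=M+diag_k(a_1^2-1,\ldots,a_k^2-1). $$
Thus the perturbation of $M$ coming from such a transformation is the diagonal matrix with entries $a_i^2-1$. (Here $diag(\bu)$ is the $k\times k$ diagonal matrix attached to the systematic part, which is why only the first $k$ coordinates of $\bu$ intervene.)

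The key step is the bijectivity of the Frobenius map $x\mapsto x^2$ on $\F_q=\F_{2^t}$, whose inverse is $x\mapsto x^{2^{t-1}}$. Given the vector $\bu$ from the hypothesis, I would set $a_i=(u_i+1)^{2^{t-1}}$ for $1\leq i\leq k$, so that $a_i^2=u_i+1$; since $-1=1$ in characteristic $2$ this yields $a_i^2-1=u_i$. Moreover $a_i=0$ holds if and only if $u_i+1=0$, i.e. $u_i=1$, which is excluded by assumption, so every $a_i$ is nonzero and $\ba\in(\F_q^*)^n$. With this choice the displayed identity becomes $\ggt{G^\prime}=M+diag_k(\bu)$, whence $det(\ggt{G^\prime})=det(M+diag(\bu))=0$. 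By Theorem \ref{th23} the code $\Co_\ba$ is not LCD, and since it is monomial equivalent to $\Co$, the code $\Co$ is not pure LCD.

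The only real obstacle is the passage from a formal diagonal perturbation to an \emph{admissible} monomial vector with all entries nonzero, and this is precisely where characteristic $2$ is essential: because squaring is a bijection, the hypothesis $u_i\neq 1$ is exactly equivalent to the scalar $a_i$ being nonzero. In odd characteristic this breaks down, since squaring is two-to-one and misses the non-squares, so not every $\bu$ is attainable; that is why in that setting one must assume $-det(M)/det(M_{\{i\}})+1$ is a nonzero square, as in Theorem \ref{th41}.
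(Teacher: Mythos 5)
Your proposal is correct and follows essentially the same route as the paper: both arguments pick $a_i$ with $a_i^2-1=u_i$ (possible since squaring is bijective in characteristic $2$), note that $u_i\neq 1$ forces $a_i\neq 0$, and conclude from $\det(G'(G')^T)=\det(M+diag(\bu))=0$ that the equivalent code fails Massey's criterion. Your write-up is in fact more careful than the paper's, which misstates $M+diag(\bu)$ as a generator matrix of $\Co_\ba$ rather than as $G'(G')^T$, and which leaves implicit the point you make explicit about only the systematic columns being rescaled.
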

\begin{proof} Let $\ba=(a_1,\cdots,a_n)$ where $a_i^2-1=u_i$ for all $1\leq i\leq n$. Then $M+diag(\bu)$ is a generator matrix of $\Co_\ba$. The rest of the proof follows from Proposition \ref{guenda} and the fact that all non-zero element of $\F_q$ are square.
\end{proof}
\vspace{1.5mm}

In \cite{hao}, the author present a week condition on the existence of pure LCD code over $\F_{2^t}$. In the next theorem we give another conditions and one conjecture.

\begin{theorem}\label{theorem42} Let $q = 2^t$ with $t>1$ and let $\Co$ be $[n, k, d]$ linear code over $\F_q$. Assume one of the following conditions holds
\begin{itemize}
\item[•] $det(M_j)\not=0$ and $det(M_j)\not=det(M)$ for some integer $1\leq j\leq k$.
\item[•] There exist a subsets $J\subseteq \{1,\cdots,n\}$ with $det(M_J)\not=0$ and $det(M_I)=0$ for any subset $I$ of $J$ with $1\leq I < \#J$.
\end{itemize}
Then, $\Co$ is not pure LCD code and there exists $\ba = (a_1, \cdots, a_n) \in \F^n_q$ with $a_j \not= 0$ for any $1 \leq j \leq n$ such that $\Co_a$ is one dimensional hull.
\end{theorem}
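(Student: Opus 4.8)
The plan is to reduce the whole statement to a question about the $k\times k$ matrix $M=GG^T$ perturbed by a diagonal. If I scale the $i$-th column of $G=[I_k:P]$ by $a_i$ for $1\le i\le k$ and leave the last $n-k$ columns unchanged (so $a_j=1$ for $j>k$), the resulting generator matrix $G^\prime$ of $\Co_\ba$ satisfies $\ggt{G^\prime}=M+diag_k(\bu)$ with $u_i=a_i^2-1$. Since squaring is a bijection of $\F_{2^t}$, each prescribed $u_i$ is realised by a unique $a_i$ with $a_i^2=u_i+1$, and $a_i=0$ happens exactly when $u_i=1$; hence choosing a legal $\ba\in(\F_q^*)^n$ is the same as choosing $\bu$ with every $u_i\neq 1$. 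Thus it suffices to exhibit $\bu$ with all $u_i\neq 1$ and $det(M+diag_k(\bu))=0$: Lemma \ref{lemma46} then yields that $\Co$ is not pure LCD, and if in addition $rank(M+diag_k(\bu))=k-1$, Proposition \ref{guenda} gives $dim(Hull(\Co_\ba))=k-(k-1)=1$.

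My main tool is the multilinear expansion of the determinant along the perturbed diagonal,
$$det\big(M+diag_k(\bu)\big)=\sum_{S\subseteq J}\Big(\prod_{i\in S}u_i\Big)det(M_S),\qquad J=\text{support}(\bu),$$
which recovers the weight-one identity $det(M+diag_k(\bu))=det(M)+u_j\,det(M_{\{j\}})$ and, more generally, underlies Lemma \ref{matrixlemma} and the lemma following it. For the first hypothesis I take $\bu$ of weight one on the index $j$ with $det(M_{\{j\}})\neq 0$ and $det(M_{\{j\}})\neq det(M)$, and set $u_j=det(M)/det(M_{\{j\}})$. This makes the determinant vanish, and $det(M)\neq det(M_{\{j\}})$ forces $u_j\neq 1$ (recall $-1=1$), so $a_j\neq 0$. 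Deleting row and column $j$ leaves the untouched nonsingular block $M_{\{j\}}$, so $rank(M+diag_k(\bu))\ge k-1$; combined with the vanishing determinant this pins the rank at exactly $k-1$.

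For the second hypothesis I take $\bu$ supported on $J$. Since $det(M_I)=0$ whenever $1\le \#I<\#J$, all intermediate terms of the expansion drop out and only the two extremes survive:
$$det\big(M+diag_k(\bu)\big)=det(M)+\Big(\prod_{i\in J}u_i\Big)det(M_J).$$
I then select the $u_i$ ($i\in J$) so that $\prod_{i\in J}u_i=det(M)/det(M_J)$ and every $u_i\neq 1$; here $\F_q\setminus\{0,1\}$ has $q-2\ge 2$ elements for $t\ge 2$, which gives enough room to realise any prescribed nonzero product. For the rank I delete a row and column $i_0\in J$ and apply the same expansion to $M_{\{i_0\}}$: every term indexed by a proper $S\subsetneq J\setminus\{i_0\}$ carries the factor $det(M_{\{i_0\}\cup S})=0$, so
$$det\big(M_{\{i_0\}}+diag_{k-1}(\bu^\prime)\big)=\Big(\prod_{i\in J\setminus\{i_0\}}u_i\Big)det(M_J),$$
which is nonzero once the chosen $u_i$ are nonzero; again $rank=k-1$.

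The delicate points, and the place I expect to spend most effort, are (i) certifying that the rank drops by exactly one, since a one-dimensional rather than merely positive-dimensional hull is required, which is precisely why I exhibit an explicit nonsingular $(k-1)\times(k-1)$ principal submatrix; and (ii) the finite-field arithmetic of writing $det(M)/det(M_J)$ as a product of $\#J$ factors from $\F_q\setminus\{0,1\}$ while keeping all factors nonzero, together with the special case $det(M)=0$, where the product must vanish and I must steer the single unavoidable zero factor onto the deleted index $i_0$. I would also remark that when $\#J=1$ the second hypothesis collapses to the first, so the two cases need not be treated independently there.
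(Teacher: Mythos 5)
Your proposal is correct, and it shares the paper's overall skeleton --- reduce everything to making $\det(M+diag_k(\bu))$ vanish with every $u_i\neq 1$ (using that squaring is a bijection of $\F_{2^t}$), treat the first hypothesis with a weight-one perturbation and the second with a perturbation supported on $J$ via the determinant expansion --- but it takes a genuinely different and stronger route at the key step of pinning the hull dimension to exactly one. In the paper's Case 2 the constructed $\ba$ is only shown to make $\Co_\ba$ non-LCD (via Lemma \ref{lemma46}); the one-dimensional-hull claim is then obtained by invoking Theorem \ref{theorm31} to rescale yet again, so the explicit $\ba$ is never certified to work. You instead exhibit a nonsingular $(k-1)\times(k-1)$ principal submatrix of $M+diag_k(\bu)$, namely $M_{\{i_0\}}+diag_{k-1}(\bu^\prime)$ whose determinant collapses to $\bigl(\prod_{i\in J\setminus\{i_0\}}u_i\bigr)\det(M_J)\neq 0$ by the same expansion, forcing $\rank(M+diag_k(\bu))=k-1$ and hence hull dimension exactly one by Proposition \ref{guenda}; this also absorbs the non-LCD case $\det(M)=0$ by steering the single zero factor onto the deleted index, where the paper again falls back on Theorem \ref{theorm31}. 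The arithmetic you defer --- writing $\det(M)/\det(M_J)$ as a product of $\#J$ elements of $\F_q\setminus\{0,1\}$ --- does go through for $\#J\geq 2$ since $q\geq 4$ (fix all but two factors and solve $u_{j_1}u_{j_2}=d$ while avoiding $0$ and $1$), which is exactly the role of $u_{j_1},u_{j_2}$ in the paper's proof. One caveat: your closing remark that $\#J=1$ collapses the second hypothesis into the first is not quite accurate, since a singleton $J$ only requires $\det(M_J)\neq 0$ and not $\det(M_J)\neq\det(M)$; when $\det(M_J)=\det(M)$ the construction would force $u_j=1$, i.e.\ $a_j=0$. But the paper's proof has the same blind spot (it tacitly assumes $\#J\geq 2$, as its subsequent remark about codes with $GG^T=I_k$ escaping both conditions confirms), so this is a defect of the theorem's phrasing rather than of your argument.
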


\begin{proof} If $\Co$ is not an LCD code, the result holds from Theorem $\ref{theorm31}$. We assume that $\Co$ is LCD code and let $G=[I_k:P]$ be its generator matrix. Let $M=\ggt{G}$, then $det(M) \not=0$. 
\begin{itemize}
\item[] {\bf Case 1:} If $det(M_j)\not=0$ and $det(M_j)\not=det(M)$ for some integer $1\leq j\leq k$. The results holds from Theorem \ref{th41} and the fact that all element of $\F_q$ are squares.
\item[] {\bf Case 2:} Note that, there exist a subsets $J\subseteq \{1,\cdots,n\}$ with $det(M_J)\not=0$ and $det(M_I)=0$ for any subset $I$ of $J$ with $1\leq I < \#J$. Let $\bu=(u_1,\cdots,u_n)$ with $u_i=0$ for $i \not \in J$. Using Lemma \ref{matrixlemma} we get
\begin{align*}
det(M+diag(\bu))&=det(M)+\left(\prod_{i \in J}u_i\right) det(M_J)\\
&=det(M)+u_{j_1}u_{j_2}\left(\prod_{i \in J \setminus \{j_1,j_2\}}u_i\right) det(M_J).
\end{align*}
By choosing $u_j \in \F_q\setminus \{0,1\}$ for $j \in J \setminus \{j_1,j_2\}$ and 
$$
u_{j_1}=\begin{cases}
u_{j_2}^{-1} & \text{ if } \left(\prod_{i \in J/\{j_1,j_2\}}u_i\right) \frac{det(M_J)}{det(M)}=1, \\
\left(\prod_{i \in J/\{j_1,j_2\}}u_i\right)^2 \left(\frac{det(M_J)}{det(M)}\right)^2 & \text{ otherwise}.
\end{cases}
$$
Because $u_i\not =1$ for all $1 \leq i\leq n$ and $det(M+diag(\bu))=0$, the result holds from Lemma \ref{lemma46} and Theorem \ref{theorm31}.
\end{itemize}
\end{proof}

\begin{remark}
Theorem \ref{theorem42}, give very powerful restriction on the existence of pure LCD code over $\F_{2^t}$. In fact, linear $[n,k]$ codes with generator matrix $GG^T=I_k$ are the only codes we found that escape this two condition and its not difficult to prove such codes are not pure LCD code. Therefore we introduce the following conjecture.
\end{remark}

\begin{conjecture}
Let $q = 2^t$ with $t>1$ and $\Co$ be an $[n,k]$ linear code over $\F_q$. Then $\Co$ is not pure LCD code.
\end{conjecture}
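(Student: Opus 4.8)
The plan is to convert the statement into a single existence question about the maximal minors of $G$, and then to settle that question in characteristic $2$. First, if $\Co$ is not LCD then $\Co=\Co_{(1,\dots,1)}$ already exhibits a monomial-equivalent code that is not LCD, so I may assume $\Co$ is LCD with generator matrix $G=[I_k:P]$, so that $M=\ggt{G}$ is nonsingular by Theorem~\ref{th23}. For $\ba\in(\F_q^*)^n$ put $G'=G\,\operatorname{diag}(\ba)$; then $G'(G')^T=G\,\operatorname{diag}(a_1^2,\dots,a_n^2)\,G^T$, and $\Co_\ba$ fails to be LCD precisely when this determinant vanishes. Since the Frobenius map is bijective on $\F_{2^t}$, as $a_i$ runs over $\F_q^*$ the value $d_i:=a_i^2$ runs over all of $\F_q^*$. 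Hence the conjecture is equivalent to the assertion that
$$\det\!\big(G\,\operatorname{diag}(\mathbf{d})\,G^T\big)=0 \quad\text{for some } \mathbf{d}=(d_1,\dots,d_n)\in(\F_q^*)^n.$$

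The decisive step is Cauchy–Binet. Writing $G_S$ for the $k\times k$ submatrix of $G$ formed by the columns indexed by a $k$-subset $S\subseteq\{1,\dots,n\}$, one obtains
$$\det\!\big(G\,\operatorname{diag}(\mathbf{d})\,G^T\big)=\sum_{|S|=k}(\det G_S)^2\,\prod_{i\in S}d_i,$$
a multilinear form whose coefficients are the squared maximal minors of $G$. This viewpoint already reveals that the conjecture, taken literally, needs a nondegeneracy hypothesis: for $\Co=\F_q^n$ (so $k=n$) the sum collapses to the single monomial $(\det G)^2 d_1\cdots d_n$, which never vanishes on $(\F_q^*)^n$, and indeed $\F_q^n$ is pure LCD. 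I would therefore assume $0<k<n$ and that $\Co$ has full support, so that $G$ has more than one nonzero maximal minor and at least two distinct monomials occur.

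To produce a zero on the torus $(\F_q^*)^n$ I would eliminate one variable. Separating the terms containing $d_n$, write the form as $d_nA+B$, where
$$A=\sum_{S\ni n}(\det G_S)^2\!\!\prod_{i\in S\setminus\{n\}}\!\! d_i,\qquad B=\sum_{S\not\ni n}(\det G_S)^2\prod_{i\in S}d_i.$$
If I can pick $d_1,\dots,d_{n-1}\in\F_q^*$ making both $A\neq0$ and $B\neq0$, then $d_n:=B/A\in\F_q^*$ forces the form to vanish. Under $0<k<n$ and full support both $A$ and $B$ are nonzero polynomials, since some nonsingular maximal minor uses the last coordinate and some avoids it. As an independent route matching the paper, Theorem~\ref{theorem42} already handles every $M$ meeting either of its two conditions, and the only residual case is $M=I_k$, i.e.\ $PP^T=0$; there a rank-one update $\det(I_k+u\,\mathbf{p}\mathbf{p}^T)=1+u\,\mathbf{p}^T\mathbf{p}$ on a nonzero column $\mathbf{p}$ of $P$ disposes of $\mathbf{p}^T\mathbf{p}\notin\{0,1\}$, while a rank-two update combining $\mathbf{p}$ with a coordinate vector $\mathbf{e}_a$ — linearized through characteristic-$2$ square roots via $\det(I_k+UU^T)=\det(I_2+U^TU)$ — disposes of $\mathbf{p}^T\mathbf{p}\in\{0,1\}$, with $q\ge4$ supplying just enough room to keep all scalars away from the forbidden value $1$ of Lemma~\ref{lemma46}.

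The main obstacle is the final non-vanishing guarantee over the small grid $(\F_q^*)^{n-1}$: I must ensure that $A$ and $B$ admit a \emph{common} nonzero point, which I would attempt via the Combinatorial Nullstellensatz applied to $AB$ or a careful greedy choice, the difficulty being that $q\ge4$ leaves little slack. In the alternative route the corresponding gap is the purely structural claim — supported in the paper only by computation — that escaping both conditions of Theorem~\ref{theorem42} forces $M=I_k$; moreover the rank-one and rank-two updates above act on the redundant columns of $G$ and so require extending Lemma~\ref{lemma46} beyond perturbations of the first $k$ diagonal entries. I expect this simultaneous-non-vanishing (respectively structural) step to be where the real work lies, and it is plausibly the reason the statement is left as a conjecture; it may ultimately require the hypothesis $0<k<n$ to be built into the statement.
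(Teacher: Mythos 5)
The first thing to say is that the paper offers no proof of this statement: it is explicitly left as a conjecture, supported only by the partial result of Theorem~\ref{theorem42} and by computer experiments. So your proposal cannot be compared to a proof in the paper; it has to be judged against the statement itself. On that score you have found something real: the conjecture as written is false. Your example $\Co=\F_q^n$ is pure LCD (its hull is trivially $\{0\}$ and every monomial equivalent is again $\F_q^n$), and so, even with $k<n$, is any code with generator matrix $[I_k:\mathbf{0}]$, since then $G'\,{G'}^T=diag(a_1^2,\dots,a_k^2)$ is always nonsingular. The correct statement must exclude codes whose support has size equal to $k$, and the paper does not do this.

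The second thing to say is that the step you flag as "the main obstacle" is not an obstacle, and you should finish the argument. After discarding zero coordinates (they contribute nothing to $G\,\operatorname{diag}(\mathbf{d})\,G^T$ and may be scaled arbitrarily) and permuting to $G=[I_k:P]$ with $k<n$ and a nonzero column of $P$ in position $n$, both $A$ and $B$ in your decomposition $\det(G\,\operatorname{diag}(\mathbf{d})\,G^T)=d_nA+B$ are nonzero polynomials of degree at most $1$ in each of $d_1,\dots,d_{n-1}$: the subset $\{1,\dots,k\}$ contributes $(\det I_k)^2=1$ to $B$, and extending the nonzero column $n$ to a basis by columns of $I_k$ produces a nonsingular $k$-subset through $n$, contributing to $A$. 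Hence $AB$ is a nonzero polynomial of degree at most $2$ in each variable, and the standard grid non-vanishing lemma (the easy half of the Combinatorial Nullstellensatz, proved by induction and root counting) gives a point of $(\F_q^*)^{n-1}$ with $AB\neq 0$ as soon as $q-1>2$, i.e.\ $q\geq 4$ --- precisely the hypothesis $t>1$, so the slack is exactly sufficient rather than insufficient. Setting $d_n=B/A\in\F_q^*$ gives $d_nA+B=2B=0$ in characteristic $2$, and bijectivity of Frobenius yields $\ba\in(\F_q^*)^n$ with $\Co_\ba$ non-LCD by Theorem~\ref{th23}. This proves the corrected conjecture in full --- an $[n,k]$ code over $\F_{2^t}$, $t>1$, is pure LCD if and only if its support has size $k$ --- and it does so without Theorem~\ref{theorem42}, without Lemma~\ref{lemma46}, and without the paper's unverified structural claim that the only codes escaping Theorem~\ref{theorem42} satisfy $\ggt{G}=I_k$. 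Your secondary route through that claim inherits its gap and the restriction of Lemma~\ref{lemma46} to the first $k$ diagonal entries; drop it in favour of the Cauchy--Binet argument.
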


In the following theorem we shows that there exist infinitely many pure LCD codes.

\begin{theorem} Let $q$ be an odd prime power. Let $\Co$ be an $[n,k]_q$ linear code with the generator matrix $G = [I_k:I_k]$. If $-1$ is not a square in $\F_q$ then $\Co$ is pure LCD code.
\end{theorem}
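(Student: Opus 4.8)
The plan is to verify directly, for every choice of nonzero scaling vector, that the scaled code remains LCD via Massey's criterion (Theorem \ref{th23}). Since $G=[I_k:I_k]$ has $2k$ columns, write $n=2k$ and split an arbitrary $\ba=(a_1,\dots,a_{2k})\in(\F_q^*)^n$ into two blocks $\ba=(b_1,\dots,b_k,c_1,\dots,c_k)$ matching the two copies of $I_k$. First I would record that $\Co$ itself is LCD: since $GG^T=I_kI_k^T+I_kI_k^T=2I_k$ and $q$ is odd, this is nonsingular, consistent with the fact that a pure LCD code must in particular be LCD.

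Next I would produce a generator matrix of $\Co_\ba$ by multiplying the $j$-th column of $G$ by $a_j$. Because each of the two blocks of $G$ is $I_k$, this scaling simply replaces the two identity blocks by diagonal matrices, giving
$$G'=\bigl[\,diag_k(b_1,\dots,b_k)\;:\;diag_k(c_1,\dots,c_k)\,\bigr].$$
The key computation is then
$$G'{G'}^T=diag_k(b_1^2,\dots,b_k^2)+diag_k(c_1^2,\dots,c_k^2)=diag_k(b_1^2+c_1^2,\dots,b_k^2+c_k^2),$$
so the Gram matrix of $\Co_\ba$ is diagonal with $i$-th entry $b_i^2+c_i^2$. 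By Theorem \ref{th23}, $\Co_\ba$ is LCD precisely when this diagonal matrix is nonsingular, i.e. when $b_i^2+c_i^2\neq0$ for every $1\le i\le k$.

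Finally I would rule out a vanishing diagonal entry using the hypothesis on $-1$. Since $\ba\in(\F_q^*)^n$, both $b_i$ and $c_i$ are nonzero, so $b_i^2+c_i^2=0$ would give $(b_ic_i^{-1})^2=-1$, exhibiting $-1$ as a square in $\F_q$ and contradicting the hypothesis. Hence $b_i^2+c_i^2\neq0$ for all $i$, the matrix $G'{G'}^T$ is nonsingular, and $\Co_\ba$ is LCD. As $\ba$ was an arbitrary element of $(\F_q^*)^n$, and the hull dimension is unchanged under the coordinate permutations appearing in monomial equivalence, every monomial equivalent code of $\Co$ is LCD, so $\Co$ is pure LCD. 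There is no genuine analytic obstacle here; the only point requiring care is recognizing that scaling the columns preserves the block-diagonal form, which collapses $G'{G'}^T$ to a diagonal matrix and reduces the whole problem to the single scalar condition $b_i^2+c_i^2\neq0$.
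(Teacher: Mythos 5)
Your proposal is correct and follows essentially the same route as the paper: compute $G'{G'}^T$ for the scaled generator matrix, observe it is diagonal with entries $a_i^2+a_{i+k}^2$, and use that $-1$ is a nonsquare to force each entry to be nonzero. The only cosmetic differences are that you invoke Massey's nonsingularity criterion where the paper cites its rank-of-$GG^T$ proposition (equivalent here), and you explicitly note that the permutation part of monomial equivalence is harmless, a point the paper leaves implicit.
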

\begin{proof} Let $\ba=(a_1,\cdots,a_n) \in \F_q$ with $a_j\not=0$ for any $1\leq j \leq n$. Let $G^\prime$ be the generator matrix of $\Co_\ba$ obtained from $G$ by multiplying its $j$-th column by  $a_j$ for $j \in \left\{ 1, 2,\cdots, n \right\}$. Then 

$$\ggt{G^\prime}=\begin{pmatrix}
  \begin{matrix}
  a_1^2+a_{k+1}^2 &  \\
   & a_2^2+a_{k+2}^2
  \end{matrix}
  &  & \bigzero \\
  \bigzero &  &
  \begin{matrix}
   \ddots &  \\
   & a_k^2+a_{2k}^2
  \end{matrix}
\end{pmatrix}.
$$
Since, $-1$ is not a square in $\F_q$ then $a_i^2+a_{i+k}^2=0$ if and only if $a_i=0$ and $a_{i+k}=0$. Hence $rank(\ggt{G^\prime})=k$. Therefore, by Proposition \ref{guenda} 
$$ dim(Hull(\Co_a))=0.$$
This completes the proof.
\end{proof}

\section*{Concluding remarks}
In this paper, we presented a general construction of linear code with a small dimensional hull from any linear codes. We showed that for any linear code with $h$-hull dimension, there exists an equivalent code with $h$-hull for any $0\leq j\leq c$. In particular, for any $[n,k,d]$ self-orthogonal code there exists a code with the same parameters and with an arbitrary dimensional hull. We also introduce the notion of pure LCD code. Furthermore, we give sufficient conditions for the existence of one dimensional  $[n,k,d]$ code from another code with the same parameters. 

Finally, we present a family of pure LCD codes over finite fields with odd characteristics and very week conditions for the existence of pure LCD code over finite fields with even characteristics and rise a conjecture for this case. An interesting extension to this work would be to give a necessary and sufficient condition for the existence of pure LCD code over finite fields with odd characteristics.

\vskip 10pt
\begin {thebibliography}{100}

\bibitem{artin} Artin, E.: Geometric Algebra (Interscience Tracts in Pure and Applied Mathematics No. 3), Interscience, New York, 1957.

\bibitem{ba2019} Bassa A., Stichtenoth H., Self-dual codes better than the GilbertVarshamov bound, Des., Codes and Cryptogr. 87, 173-182, (2019).

\bibitem{boon} Boonniyom, K., Jitman, S.: Complementary dual subfield linear codes over finite fields. [Online]. Available:
https://arxiv.org/abs/1605.06827 (2016).

\bibitem{burn2006} Brun T., Devetak I., Hsieh M.H.: Correcting quantum errors with entanglement. Science 314, 436-439
(2006).

\bibitem{carlet2018} Carlet, C., Mesnager, S., Tang, C., Qi, Y., Pellikaan, R.: Linear codes over $\F_q$ are equivalent to LCD
codes for $q >3$. IEEE Trans. Inf. Theory 64(4), 3010-3017 (2018).

\bibitem{tang17} Carlet, C., Mesnager, S., Tang, C., Qi, Y., Pellikaan, R.: Linear codes over $\F_q$ are equivalent to LCD codes for $q> 3$. IEEE Transactions on Information Theory, 64(4), 3010-3017 (2018).

\bibitem{carlet19} Carlet, C., Li, C., Mesnager, S.: Linear codes with small hulls in semi-primitive case. Des. Codes Cryptogr, (87), 3063-3075 (2019).

\bibitem{guenda2018} Guenda, K., Jitman, S., Gulliver, T.A.: Constructions of good entanglement assisted quantum error
correcting codes. Des. Codes Cryptogr. 86, 121-136 (2018).

\bibitem{hao} Chen, H. On the Hull-Variation Problem of Equivalent Linear Codes. IEEE Trans. Inf. Theory 69(5), 2911-2922 (2023).

\bibitem{jin} Jin L.: Construction of MDS codes with complementary duals. IEEE Trans. Inf. Theory 63(5), 2843–2847
(2017).

\bibitem{li19} Li, C., Zeng, P.: Constructions of linear codes with one-dimensional hull, IEEE Trans. Inf. Theory, 65(3), 1668-1676 (2019).

\bibitem{massey92} Massey, J. L.: Linear codes with complementary duals. Discrete Mathematics, 106, 337-342 (1992).

\bibitem{Massey} Massey J.L.: Linear codes with complementary duals. Discret. Math. 106(107), 337-342 (1992).

\bibitem{sendrier} Sendrier N.: On the dimension of the hull. SIAM J. Discret. Math. 10(2), 282–293 (1997).

\bibitem{sok22} Sok, L.: MDS linear codes with one-dimensional hull. Cryptography and Communications, 14(5), 949-971 (2022).

\bibitem{st2006} Stichtenoth H., Transative and self-dual codes attaining the TsafasmanVladut-Zink bound, IEEE Trans. Inf. Theory, 52(5), 2218-2224, (2006).

\bibitem{yan} Yan H., Liu H., Li C., Yang S.: Parameters of LCD BCH codes with two lengths. Adv. Math. Commun.
12(3), 579–594 (2018).
39. 

\bibitem{yang} Yang X., Massey J.L.: The condition for a cyclic code to have a complementary dual. Discret. Math.
126(1-3), 391-393 (1994).

\end {thebibliography}

\end{document}